\newcommand{\restr}[2]{\raisebox{-.5ex}{$\left.#1\right|_{#2}$}}
\newcommand{\PPTcone}{\mathsf{PPT}}
\newcommand{\SEP}[1]{\mathsf{SEP}_{#1}}
\newcommand{\SEPcone}{\mathsf{SEP}}
\renewcommand{\epsilon}{\varepsilon}
\renewcommand{\phi}{\varphi}
\newcommand{\overbar}[1]{\mkern 1.5mu\overline{\mkern-1.5mu#1\mkern-1.5mu}\mkern 1.5mu}
\newcommand{\MLDUI}[1]{\mathcal{M}_{#1}
\newcommand{\rom}[1]{\expandafter\@slowromancap\romannumeral #1@}
(\mathbb{C})^{\times 2}_{\mathbb{C}^{#1}}}
\newcommand{\C}{\mathbb{C}}
\renewcommand{\C}[1]{\mathbb{C}^{#1}}
\newcommand{\M}[1]{\mathcal{M}_{#1}(\mathbb{C})}
\newcommand{\id}[1]{\mathrm{I}_{n}}
\newcommand{\channel}{\mathrm{\Phi}}
\theoremstyle{definition}
\newtheorem{theorem}{Theorem}[section]
\newtheorem{definition}[theorem]{Definition}
\newtheorem{proposition}[theorem]{Proposition}
\newtheorem{corollary}[theorem]{Corollary}
\newtheorem{lemma}[theorem]{Lemma}
\newtheorem{remark}[theorem]{Remark}
\newtheorem{example}[theorem]{Example} 
\begin{document}

\title{Rank-based entanglement detection for bound entangled states}

\author{Aabhas Gulati}
\email{aabhas.gulati AT math.univ-toulouse.fr}
\affiliation{Institut de Mathématiques de Toulouse, France}

\begin{abstract}
    We develop a new method for entanglement detection in bipartite quantum states by using the violation of the rank-1-generated property of matrices. The positive-semidefinite matrices form a convex cone that has extremal elements of rank 1. But, convex conic subsets resulting from the presence of linear constraints allow extremal elements of rank $\geq 2$. The problem of deciding when a matrix is rank-1 generated, i.e, a sum of rank-1 positive-semidefinite matrices, has been studied extensively in optimization theory. This rank-1 generated property acts as an entanglement criterion, and we use this property to find novel classes of PPT (Positive under partial transposition) entangled states. We do this by mapping some faces of PPT density matrices to convex cones that are not rank-1 generated. We show that all separable states map to a rank-1 generated matrix. In general, the same is not true for the corresponding mapping of PPT entangled states. We also extend this approach to construct PPT entangled edge states. Finally, we provide witnesses that detect violations of the rank-1 generated property.
\end{abstract}

\maketitle
\section{Introduction}
With the advent of quantum information, entanglement is now recognized as an essential resource, enabling tasks such as quantum cryptography \cite{Ekert1991crypto}, quantum teleportation \cite{Bennett1993teleport}, and measurement-based quantum computation \cite{Raussendorf2001computer}. Despite significant progress in this field, the theoretical understanding of entanglement in mixed quantum states remains challenging. Even the fundamental question of deciding if any given mixed quantum state is entangled or separable is NP-hard \cite{gurvits2003classical,gharibian2008strong}. This fact alone points to a fundamental hardness in detecting entanglement in arbitrary states, implying that a complete characterization in terms of necessary and sufficient conditions that can be checked efficiently is unattainable (of course, unless P=NP). Still, several algorithms whose complexities scale quasi-polynomially in the system size have been developed \cite{brandao2011quasipolynomial}. The complexity of this problem led to the strategy of necessary separability criteria, such that any state that violates any of these criteria is certainly entangled. Many such criteria to detect entanglement in mixed quantum states have been identified; for a comprehensive review, see \cite{guhne2009entanglement}. 

The PPT (Positive under partial transposition) criterion, \cite{peres1996separability}, sometimes also referred to as the Peres-Horodecki criterion, is arguably the most popular criterion to detect entanglement. This test can be applied to quantum states by checking if the matrix obtained by partial transposition on one subsystem is positive semidefinite. If not, the state is entangled. Surprisingly, this test is also sufficient in two-qubit and qutrit-qubit systems \cite{Strmer1963,horodecki1996separability}. In quantum systems of other dimensions, there exist states that are PPT but are still entangled. The first few examples of two qutrit and a qubit-ququart were constructed in \cite{Horodecki1997PPTent}. The PPT criterion is also sufficient for the class of quantum systems with complete unitary symmetries, the Werner states, the isotropic states \cite{vollbrecht2001entanglement}, and their generalizations with weaker symmetries \cite{park2024universal}. 
Although the PPT criterion originally had purely mathematical motivations, the states that are PPT entangled were found to satisfy quite interesting physical properties. Particularly, these states cannot be converted into pure entangled states under LOCC operations \cite{Horodecki1998bound}, a fact that is intimately connected to the fundamental irreversibility of entanglement \cite{Vidal2001irreversible}. For this reason, they are sometimes called bound entangled states. Although for high enough dimensions, random-matrix arguments imply that these states have significant volume for some chosen noise thresholds \cite{aubrun2014entanglement,louvet2025bound,nechita2025random}, the systemic construction of examples of PPT entangled states remains open, and highly researched \cite{breuer2006optimal, spengler2012entanglement,piani2007class,singh2020entanglement, ghimire2023quantum}. Recently, some bound entangled states with high Schmidt numbers were also constructed in \cite{pal2019class, huber2018high, krebs2024high}. 

In this article, we present a new technique in entanglement detection, which we use to find novel examples of PPT entangled quantum states. We do this by making useful connections between PPT entanglement and the so-called ``rank-1 generated property" for conic subsets of positive-semidefinite (PSD) cones. This area of research has been prominent in matrix analysis and optimization theory, see, for example, \cite{argue2023necessary,hildebrand2016spectrahedral}. It is well-known that the convex cone of PSD matrices has extremal elements of rank-1, and any element in the PSD cone is a finite sum of such rank-1 elements. In general, this is not true for conic subsets of PSD, i.e, by imposing appropriate constraints on the matrices, the ranks of some of the extremal elements of the corresponding conic subset can be strictly greater than $1$. Many examples of such conic subsets can be found in \cite{argue2023necessary,hildebrand2016spectrahedral}. In general, this problem is usually understood on a case-by-case basis \cite{argue2023necessary}, and only a few general conditions for a cone to satisfy this property are known, only when the number of constraints is small.

Let us briefly summarise our approach here: We map the bipartite PPT states to a constrained PSD convex cone that is not rank-1 generated. The main observation is that whenever a state is separable, this mapping results in a rank-1 generated matrix, i.e, a sum of \emph{only} rank-1 elements of a constrained PSD cone. As we show, this provides a non-trivial necessary condition for separability, since any matrix that is mapped to any element that is not rank-1 generated can only arise from a PPT entangled state. To test the validity of the considered approach, we provide constructions of such a mapping to convex cones where the rank-1 generated property has already been studied in the literature, and for different faces of the PSD cone, we can detect PPT entangled states. Although the core idea of the method is quite simple to state, we achieve a wide variety of constructions within this approach, with a high number of free parameters. In effect, this approach is analogous to the range criterion for entanglement, which detects PPT entangled states in various cases, but is not as easy to compute for arbitrary states.

We now provide the outline of this article. In \cref{sec:prelims}, we discuss the preliminary topics to understand this article. In particular, we focus on separability of quantum states, local diagonal orthogonally invariant (LDOI) states, the rank-1-generated property of convex cones, and some graph-theoretic notions. In \cref{sec:main-results}, we discuss the core ideas, along with the main result in \cref{thm:rank-1-generated-ent}, that provides a method to detect entanglement by checking whether an element in a convex conic subset of PSD is rank-1 generated. In \cref{sec:entanglement-detection}, we show that the method can be applied to successfully detect entanglement in various families of states, and construct PPT entangled states in high dimensions and with a large number of parameters. These families of states we construct are connected to conic subsets of PSD for which the rank-1-generated property is well-studied. 
\section{Preliminaries}
\label{sec:prelims}
\subsection{Notation}
The sets of complex, real, and non-negative real numbers will be denoted as $\mathbb{C}, \mathbb{R}$ and $\mathbb{R}_+$ respectively. Let $[d]$ denote the finite set $\{1, 2 \ldots d\}. $ We adopt Dirac's \textit{bra-ket} notation to denote vectors, elements of $\mathbb{F}^d$ where $\mathbb{F} \in \{\mathbb{R}, \mathbb{C} \}$ and, the notation $\mathbb{R}_+^d$ will denote the set of $d$ dimesional non-negative vectors. We denote the standard basis vector $e_i\in \mathbb{C}^d$ with the ket \(\ket{i}\), and its dual vector, \(e_i^*\) as \(\bra{i}\). For vectors \(v, w \in \mathbb{C}^d\), the linear map \(vw^* : \mathbb{C}^d \rightarrow \mathbb{C}^d\) is written as \(\ketbra{v}{w}\), and it represents a rank-$1$ matrix. To denote complex, and real matrices of size $d \times d'$, we use the notation $\mathcal{M}_{d,d'}$ and, $\mathcal{M}_{d,d'}(\mathbb{R})$, respectively. If the dimensions $d = d'$, we shorten it to \(\mathcal{M}_d := \mathcal{M}_{d,d}\) and to \(\mathcal{M}_d (\mathbb{R}) := \mathcal{M}_{d,d} (\mathbb{R})\). The rank and range of a matrix will be denoted by $\operatorname{rk}(A)$ and $\operatorname{ran}(A)$, respectively. The set of self-adjoint matrices is denoted by $\mathcal{M}_d^{\operatorname{sa}}$. The set of complex (resp. real) positive semidefinite matrices  is denoted as $\mathcal{M}_d^+$ (resp. $\mathcal{M}^+_d(\mathbb{R})$)  and the set of matrices with non-negative entries as $\mathcal{M}_d(\mathbb{R}_+)$.  For bipartite positive semidefinite matrices, we use the notation, $(\mathcal{M}_d \otimes  \mathcal{M}_d)^+.$ We use $\operatorname{supp}(v)$ to denote the support of the vector $v \in \mathbb{C}^d$, i.e, $\operatorname{supp}(v) := \{i \in [d] \mid v_i \neq 0\}.$ We use $\odot$ to denote the entrywise product of vectors $(v \odot w)_i := v_i \cdot w_i$ and matrices $(A \odot B)_{ij} := A_{ij} \cdot B_{ij}.$ We use $\mathbb{I}_n$, and $\mathbb{J}_n$ to denote $n \times n$ identity matrix, and all ones matrix, respectively.

\subsection{Convex analysis}
In this section, we provide a brief introduction to convex analysis. As is usual in mathematical literature on entanglement theory, we work with convex cones rather than convex sets. Consider the following definition. 
\begin{definition}[Convex cones]
    Let $V$ be a real vector space. A \emph{convex cone} $\mathcal C$ is a non-empty subset of $V$ having the following two properties: 
    \begin{itemize}
        \item If $x \in \mathcal C$ and $\lambda \in \mathbb R_+ = [0, \infty)$, then $\lambda x \in \mathcal C$.
        \item If $x,y \in \mathcal C$, then $x+y \in \mathcal C$.
    \end{itemize}
    In particular, $0 \in \mathcal C$. 
\end{definition}    

The cone $\mathcal C$ is said to be \emph{pointed} if $\mathcal C \cap (-\mathcal C) = \{0\}$; in other words, $\mathcal C$ is pointed if it does not contain any line. The important elements of a convex cone are those that cannot be decomposed into other ones, called \emph{extremal} rays.  For a vector $v \neq 0$, the half-line $\mathbb R_+ v := \{\lambda v : \lambda\in \mathbb{R}_+ \} \subseteq \mathcal C$ is called an \emph{extremal ray} of $\mathcal C$ (we write $\mathbb R_+ v \in \operatorname{ext} \mathcal C$) if
$$v = x+y \quad\text{with } x,y \in \mathcal C \implies x,y \in \mathbb R_+ v.$$

The set of non-negative matrices, $\mathcal{M}_d(\mathbb{R}_+)$, and the set of positive semidefinite matrices, $\mathcal{M}_d^+$, are two well-studied convex cones in the literature. The extreme rays of both these cones can be completely characterized, as   
$\operatorname{ext}\mathcal{M}_d^+ = \{\mathbb{R}_+{\ketbra{v}{v}} \, | \, v \in \mathbb{C}^d\}$ and 
$\operatorname{ext}\mathcal{M}_d(\mathbb{R}_+) = \{\mathbb{R}_+{\ketbra{i}{j}} \}_{i,j}$ respectively. Moreover, the extremal rays of both these cones consist only of matrices that are rank-1.

In this article, $\operatorname{cone} (X)$ denotes the \emph{smallest} convex cone containing $X$. We call this the cone generated by the set $X$. In finite-dimensional vector spaces, a closed convex and pointed cone is generated by its extremal rays \cite{rockafellar1970convex}, but the same result is not necessarily true for infinite dimensions. 
\begin{definition}
Let $\mathcal{C}$ be a convex cone. Then, the convex sub-cone $\mathcal{F} \subseteq \mathcal{C}$ is called a \emph{face} of $\mathcal{C}$, if for any element $\alpha \in \mathcal{C}$, and for all decompositions, $\alpha = \alpha_1 + \alpha_2$ such that both $\alpha_1, \alpha_2 \in \mathcal{C}$, it implies that $\alpha_1, \alpha_2 \in \mathcal{F}.$ 
\end{definition}

This means that any element in the face of a convex cone has a possible conic decomposition only into the elements of the face itself. An extremal ray of the cone is a face of dimension $1$. 

For the PSD cone, it is well-known that the faces have a one-to-one correspondence with the subspaces of $\mathbb{C}^d.$ This result can be found in \cite[Corollary 12.4]{barvinok2025course}.

\begin{theorem}
\label{thm:psd-faces}
The following statements are true :
\begin{enumerate}
    \item Let $L$ be a linear subspace of $\C{d}$, then, 
\begin{equation}
    \mathcal{F}_L := \{A \in \mathcal{M}^+_d \mid  \operatorname{ran}(A) \subseteq L\}
\end{equation}

is a face of the PSD cone.

\item Any face of the PSD cone is of the form $\mathcal{F}_L$ for some linear space L.

\item Let $X \in \mathcal{M}^+_d.$ Then, $F_{\operatorname{ran}(X)}$ is the smallest face containing $X$.
\end{enumerate}
\end{theorem}
\noindent \underline{Rank-1 generated cones}: Let $\mathcal{L} := \{M_i \in \mathcal{M}^{\operatorname{sa}}_d\}^N_{i=1}$ be a finite set of complex self-adjoint matrices of size $d \times d$. Let us define the convex cone of PSD matrices with linear \emph{half-space} constraints, 
\begin{equation*}
    \mathcal{M}^+_{d}(\mathcal{L}) := \{ X \in \mathcal{M}^+_{d} \mid  \,  \forall  \, M_i \in \mathcal{L}, \, \braket{M_i,X} \geq 0 \}.
\end{equation*}

To deal with the case of $\braket{M,X} = 0,$ we can also impose the constraint $-M$ along with $M$, since, 
\begin{equation*}
    \braket{M,X} \geq 0 \text{ and }  \braket{M,X} \leq 0 \iff \braket{M,X} = 0.
\end{equation*}
If $\mathcal{L}$ is the empty set, $\mathcal{M}^+_{d}(\mathcal{L})$ corresponds to the full set of PSD matrices $\mathcal{M}^+_{d}$. Since $\mathcal{M}^+_{d}(\mathcal{L}) \subseteq \mathcal{M}^+_{d}$, any rank-$1$ element of the the $\mathcal{M}^+_{d}(\mathcal{L})$ is an extreme ray of $\mathcal{M}^+_{d}(\mathcal{L})$, as it is extremal in the larger cone $\mathcal{M}^+_{d}$. For $\mathcal{M}^+_{d}$ matrices, we know that \emph{all} the extremal rays are only rank-$1$ elements. This result is not true generally for the cones $\mathcal{M}^+_{d}(\mathcal{L})$, which may have extremal rays of higher rank. 

We now focus on the rank-1 generation property of convex conical subsets in PSD. Any such cone is said to satisfy the \emph{rank-1 generated property} if all the elements in the cone have a conic decomposition into rank-1 elements. Formally, for every convex matrix cone $\mathcal{C} \subseteq \mathcal{M}^+_d$, let $\operatorname{R_1}[\mathcal{C}]$ denote the cone generated by its rank-1 elements.

\begin{equation}
    \operatorname{R_1}[\mathcal{C}] := \operatorname{cone} \{\ketbra{v}{v} \in \mathcal{C}\}.
\end{equation}

\begin{definition}[Rank-1 generated cones]
A matrix cone $\mathcal{C}$ is said to be rank 1 generated if \begin{equation}
\label{eq:rank-1-generated}
        \operatorname{R_1}[\mathcal{C}] = \mathcal{C}.
    \end{equation}
\end{definition} 

Clearly, the PSD cone satisfies \cref{eq:rank-1-generated}, and is rank-1 generated. 

We consider an example of a conic subset of $2 \times 2$ PSD matrices in which the rank-1 generated property does not hold. Recall that $2 \times 2$ PSD matrices with trace $1$ can be mapped to the bloch sphere in $\mathbb{R}^3$ with the coordinates, \begin{align*}
r_x &:= \braket{X, \sigma_x}, \\
r_y &:= \braket{X, \sigma_x}, \\
r_z &:= \braket{X, \sigma_z}.
\end{align*} Let us consider the following set of self-adjoint matrices, $$\mathcal{L} := \{\sigma_z, -\sigma_z, \sigma_x, \sigma_y \}.$$ 

For the set $\mathcal{M}^+_d(\mathcal{L}),$ this corresponds to the half-space constraints, \begin{align*}
r_x &\geq 0, \\
r_y &\geq0, \\
r_z &=0.
\end{align*} 

In \cref{fig:bloch-constraints}, we represent the set of matrices satisfying these constraints, $\mathcal{M}^+_d(\mathcal{L})$. Note that the rank-1 elements lie on the boundary of the sphere, and hence, $\operatorname{R}_1[\mathcal{M}^+_d(\mathcal{L})] \subsetneq \mathcal{M}^+_d (\mathcal{L}).$

\begin{center}
\begin{figure*}[htbp] 
    \includegraphics[width=0.5\textwidth]{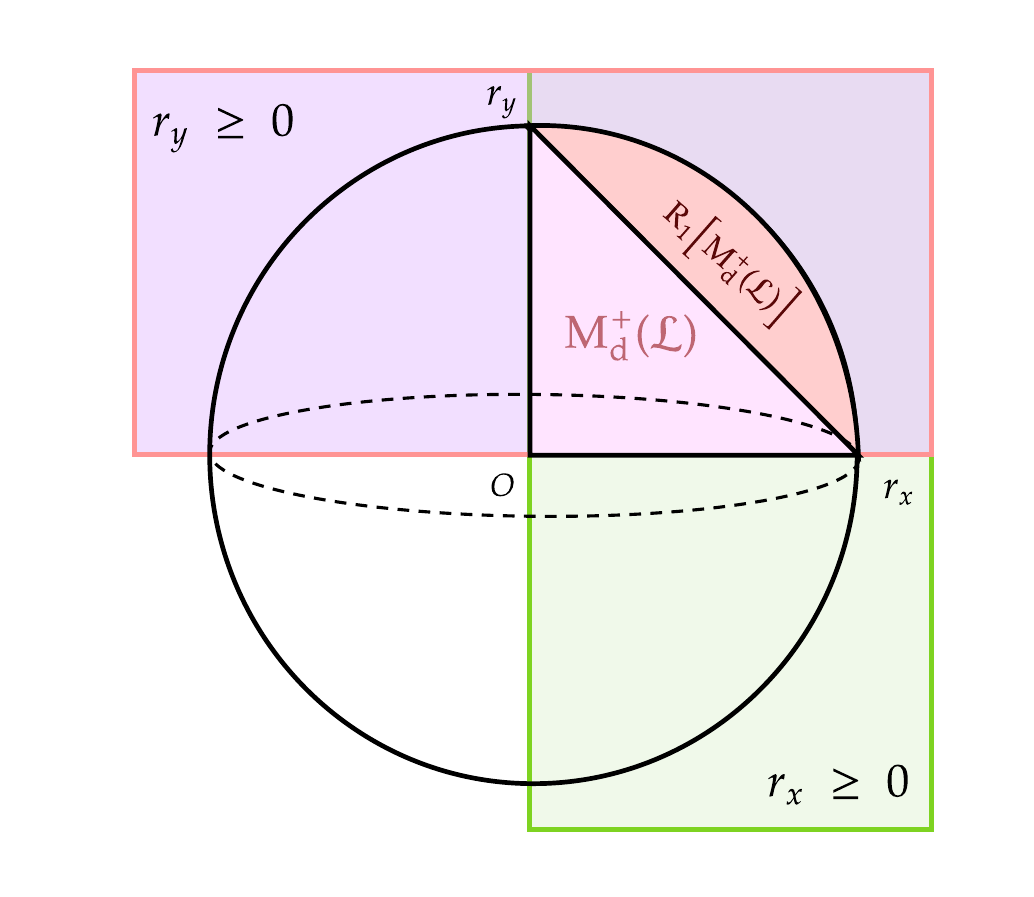}
    \label{fig:bloch-constraints}
    \caption{We plot the slice of the (normalised) Bloch sphere with $r_z = 0$ and represent the constraints $r_x \geq 0$ and $r_y \geq 0$ in (light purple and light green, respectively). The extremal points (points on the boundary of the sphere) are the rank-1 extremal elements. The region in pink is the set $\mathcal{M}^+_d (\mathcal{L})$ and in orange is the $\operatorname{R}_1[\mathcal{M}^+_d (\mathcal{L})]$.}
\end{figure*}
\end{center}

Any closed, convex, and pointed matrix cone is rank-1 generated if and only if all the extremal rays of the cone are rank-1. We phrase it as the next lemma.

\begin{lemma}
A closed convex cone $\mathcal{C} \subseteq \mathcal{M}^+_d$ is rank-1 generated if and only if $\operatorname{ext} \mathcal{C}$ consists of only rank-1 elements. 
\end{lemma}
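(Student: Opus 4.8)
The plan is to prove the two implications separately, and the forward direction is almost immediate from the definition of an extremal ray. Suppose first that $\mathcal{C}$ is $\operatorname{RG}_1$, i.e.\ $\operatorname{R_1}[\mathcal{C}] = \mathcal{C}$, and let $\mathbb{R}_+ v \in \operatorname{ext}\mathcal{C}$ with $v \neq 0$. Since $v \in \mathcal{C} = \operatorname{R_1}[\mathcal{C}]$, I can write $v = \sum_{i=1}^k \ketbra{w_i}{w_i}$ with each $\ketbra{w_i}{w_i} \in \mathcal{C}$. Grouping the sum as $v = \ketbra{w_1}{w_1} + \sum_{i \geq 2}\ketbra{w_i}{w_i}$ exhibits $v$ as a sum of two elements of $\mathcal{C}$, so the extremality hypothesis forces $\ketbra{w_1}{w_1} \in \mathbb{R}_+ v$, and by the same grouping applied to each index, likewise for every term. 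Hence each nonzero $\ketbra{w_i}{w_i}$ equals $\lambda_i v$ for some $\lambda_i > 0$; as $v \neq 0$ at least one such term is nonzero, so $v$ is a positive multiple of a rank-$1$ matrix and is therefore itself rank-$1$.

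For the converse, I assume every extremal ray of $\mathcal{C}$ is rank-$1$ and aim to show $\mathcal{C} \subseteq \operatorname{R_1}[\mathcal{C}]$ (the reverse inclusion being automatic since rank-$1$ elements of $\mathcal{C}$ lie in $\mathcal{C}$). The key input is the conic analogue of the Minkowski--Krein--Milman theorem: a closed, pointed convex cone in a finite-dimensional real vector space is the conical hull of its extremal rays. Here $\mathcal{C}$ lives in the finite-dimensional space $\mathcal{M}^{\operatorname{sa}}_d$ and is closed by hypothesis; moreover it is pointed because $\mathcal{C} \subseteq \mathcal{M}^+_d$ and the PSD cone is pointed (if $A$ and $-A$ are both PSD then $A = 0$). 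Applying the theorem, any $X \in \mathcal{C}$ is a finite non-negative combination $X = \sum_j \mu_j v_j$ of generators $v_j$ of extremal rays; by assumption each $v_j$ is rank-$1$, i.e.\ proportional to some $\ketbra{u_j}{u_j} \in \mathcal{C}$, so $X \in \operatorname{R_1}[\mathcal{C}]$. This gives $\operatorname{R_1}[\mathcal{C}] = \mathcal{C}$, as desired.

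The main obstacle, and the only place where the hypotheses beyond ``convex cone'' are used, is the generation theorem invoked in the converse. Closedness is essential: without it a cone need not be recovered from its extremal rays. Pointedness is what guarantees the existence of enough extremal rays and is inherited for free from the ambient PSD cone. I would either cite this standard fact from convex analysis, or, if a self-contained argument is preferred, prove it by induction on $\dim \mathcal{C}$: take a supporting hyperplane through a nonzero boundary point, use it to write a given $X$ through lower-dimensional faces of $\mathcal{C}$, and iterate until reaching one-dimensional faces, which are precisely the extremal rays.
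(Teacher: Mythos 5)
Your proof is correct and follows essentially the same route as the paper: the forward direction uses extremality to force every rank-1 summand to be proportional to the extremal element, and the converse invokes the conic Minkowski--Krein--Milman theorem for closed pointed cones (pointedness inherited from $\mathcal{M}^+_d$), exactly as the paper does. Your write-up is in fact more careful than the paper's rather terse proof, but there is no substantive difference in approach.
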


\begin{proof}
($\implies$) To prove the implication, we assume that $\operatorname{R}_1[\mathcal{C}] = \mathcal{C}$, i.e $\mathcal{C}$ is rank-1 generated. Let there be any element $X$ in $\operatorname{ext} \mathcal{C}$. Since the cone is rank-1 generated, by definition, there exists a decomposition of $X = \sum_k {X_k}$ such that each $X_k$ is rank 1. Also, since $X$ is extremal, we have $X_k \propto X$ for all $k$ which implies $X$ is rank-1. 

($\impliedby$) We assume that the extremal rays of $\mathcal{C}$ are rank-1. Then the proof follows from the fact that a closed convex pointed cone is the convex hull of its extreme rays \cite{rockafellar1970convex}. 
\end{proof}

The rank-1 generated property of matrices is related to problems in optimisation theory, see \cite{argue2023necessary} for more details. 

\subsection{Graph theory}

A (simple and undirected) \emph{graph} $G$ consists of a finite vertex set $V$, and a finite set $E$ of two-element (unordered) subsets of $V$, which is called the edge-set. For $k \geq 3$, a \emph{$k$-cycle} in a graph is a sequence of edges $\{v_0, v_1\}, \{v_1, v_2\}, \dots, \{v_{k-1}, v_k\}$ such that $v_1 \neq v_2 \neq \dots \neq v_k$, and $v_0 = v_k$. 3-cycles in a graph are called \emph{triangles}. A graph which does not contain any triangles is called triangle-free (or triangle-free). A graph is termed \emph{cyclic} if it contains a cycle and \emph{acyclic} otherwise. A graph is called \emph{chordal} if all cycles of size $\geq 4$ have a chord, i.e, an edge connecting two vertices not in the cycle. The \emph{adjacency matrix} $A_G \in \mathcal{M}_d(\{0,1\})$ of a graph $G$ on $d$ vertices is defined as follows: $\operatorname{diag} A_G = 0$ and $(A_G)_{ij} = (A_G)_{ji} = 1$ if $\{i, j\}$ is an edge for $i \neq j$. (Implicitly, $(A_G)_{ij} = (A_G)_{ji} = 0$ otherwise for $i \neq j$). The adjacency matrix is hermitian, and for isomorphic graphs $G$ and $H$, the adjacency matrices are related as $P \, (A_G) \, P^* = A_H$. 
where $P$ is a permutation matrix.

\begin{center}
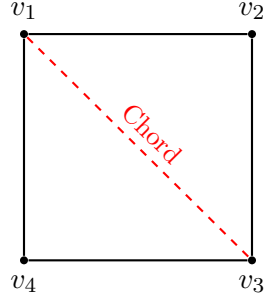
\begin{figure}[h!]
\caption{The figure represents a cycle of the graph $v_1 \rightarrow v_2 \rightarrow v_3 \rightarrow v_4  \rightarrow v_1$ and $v_1 \rightarrow v_3$ is the called the chord of the graph. }
\centering

\begin{tikzpicture}[
    vertex/.style={circle, fill=black, inner sep=0pt, minimum size=3pt}, 
    cycle_edge/.style={draw, thick},
    chord_edge/.style={draw, red, thick, dashed,font=\small}
]

\node[vertex, label=above:$v_1$] (v1) at (0,3) {};
\node[vertex, label=above:$v_2$] (v2) at (3,3) {};
\node[vertex, label=below:$v_3$] (v3) at (3,0) {};
\node[vertex, label=below:$v_4$] (v4) at (0,0) {};

\draw[cycle_edge] (v1) -- (v2) -- (v3) -- (v4) -- (v1) -- cycle; 

\draw[chord_edge] (v1) -- node[midway, above, sloped, red] {Chord} (v3);

\end{tikzpicture}
\end{figure}
\end{center}

\subsection{Separability, PPT and PPT-edge states}

\begin{definition}[Separability and Entanglement]
    A bipartite positive matrix $\rho \in \M{d} \otimes \M{d}$ is said to be separable 
    if $$\rho = \sum^N_{i=1}\ketbra{v_i}{v_i} \otimes \ketbra{w_i}{w_i}$$ for some finite set of vectors $\ket{v_i}, \ket{w_i} \in \C{d}$, and it is said to be entangled otherwise.
\end{definition}

We denote the convex cone of all separable matrices in $\M{d}\otimes \M{d}$ by $\SEPcone_d$. By the Hahn-Banach hyperplane theorem, it is possible to separate this set from every entangled state using a hyperplane. For any \textit{entangled state} $\rho$ we can find a Hermitian operator $W$ such that 

\begin{itemize}
    \item $\text{Tr}(\sigma W) \geq 0$ for all $\sigma$ in  $\SEP{d},$
 \item $\text{Tr}(\rho W) < 0.$ 
\end{itemize} 

This Hermitian operator is called an entanglement witness. Horodecki's criterion \cite{horodecki1996separability} gives us an operational way to detect entanglement, by finding a positive map $\channel$ such that  $(\operatorname{id}_d \otimes \channel) (\rho)$ is not positive semi-definite. One important positive map in this regard is the transposition map $\text{T}$. The quantum states that are not positive under transposition are entangled, and are called NPT (Non-Positive-Transpose) states. The states satisfying the PPT condition are called PPT states, which also include the set of separable states.  This criterion to verify entanglement is called the PPT (Positivity under Partial Transpose) criterion. In the case of $d = 2$, the reverse implication is also true, i.e., any state that is PPT is also separable \cite{Strmer1963, horodecki1996separability}. This is not true for $d \geq 3$. The states that satisfy the PPT condition but are still entangled (for $d \geq 3$) are called $\PPTcone$ entangled states \cite{Horodecki1997PPTent}.

\begin{definition}[Separability Problem (Informal)]
Given a bipartite density matrix $\rho \in \M{d} \otimes \M{d}$, decide whether $\rho \in$  $\SEPcone_d$ or not. 
\end{definition}

It is well known that the membership problem (and the weak membership problem) for $\SEPcone$ is NP-hard \cite{gurvits2003classical, gharibian2008strong}. Unless P = NP, there is no computationally efficient criterion to decide if a state is separable or entangled. We mention here one such important approach for detecting PPT entangled states using the range of the density matrix. 

\begin{definition}[Range criterion \cite{horodecki1997separability}]
 A bipartite mixed state is said to satisfy the range criterion if there exist vectors $\ket{a_i}, \ket{b_i} \in \mathbb{C}^d$ such that $\operatorname{ran}(\rho) = \operatorname{span}(\ket{a_i} \otimes \ket{b_i})$ and $\operatorname{ran}(\rho^\Gamma) = \operatorname{span}(\ket{a_i} \otimes \ket{\overbar{b_i}}).$
\end{definition} Let $\rho$ be a separable state. 
Then, $$\rho = \sum_k \ketbra{v_k}{v_k} \otimes \ketbra{w_k}{w_k},$$ and its partial transpose,
$$\rho^\Gamma = \sum_k \ketbra{v_k}{v_k} \otimes \ketbra{\overbar{w_k}}{\overbar{w_k}}.$$ Clearly $\ket{v_k} \otimes \ket{w_k}$ span the range of $\rho$ and $\ket{v_k} \otimes \ket{\overbar{w_k}}$ span the range of $\rho^\Gamma$, hence all separable states trivially satisfy this criterion. There has been considerable interest in studying the states that maximally violate this criterion. 

\begin{definition}[\cite{lewenstein2000optimization}]
    A PPT entangled state is called an edge-state if there is no product vector $\ket{a} \otimes \ket{b} \in \operatorname{ran}(\rho)$ such that $\ket{a} \otimes  \ket{\overbar{b}} \in \operatorname{ran}(\rho^\Gamma)$.
\end{definition} 

These edge states lie on the boundary of the PPT cone. There is another equivalent characterization of edge states that will be useful for constructing new examples of such states.  

\begin{proposition}[\cite{lewenstein2000optimization}]
    A PPT entangled state $\rho$ is an edge-state if for all (un-normalized) $\ket{ab} :=\ket{a} \otimes \ket{b}$, $\ket{ab} \neq 0$ that $$\rho - \ketbra{ab}{ab} \notin \operatorname{PPT}.$$ 
\end{proposition}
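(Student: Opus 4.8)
The plan is to reduce the statement to a single elementary fact about the PSD cone together with the action of the partial transpose on product projectors. \emph{Key lemma}: for a PSD matrix $A$ and a vector $\ket{v}$, there exists $\lambda > 0$ with $A - \lambda\ketbra{v}{v} \geq 0$ if and only if $\ket{v} \in \operatorname{Ran}(A)$. I would prove the forward direction by pairing $A - \lambda\ketbra{v}{v} \geq 0$ with any $\ket{w} \in \ker(A)$, which forces $-\lambda|\braket{w}{v}|^2 \geq 0$ and hence $\ket{v} \perp \ker(A) = \operatorname{Ran}(A)^\perp$; the converse is immediate by working inside $\operatorname{Ran}(A)$, where $A$ is strictly positive, so a small enough $\lambda$ preserves positivity.

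The second ingredient is the identity $(\ketbra{ab}{ab})^\Gamma = \ketbra{a\overbar{b}}{a\overbar{b}}$, which follows from transposing $\ketbra{b}{b}$ on the second factor. Consequently $(\rho - \lambda\ketbra{ab}{ab})^\Gamma = \rho^\Gamma - \lambda\ketbra{a\overbar{b}}{a\overbar{b}}$. Applying the key lemma once to $\rho$ (with vector $\ket{ab}$) and once to $\rho^\Gamma$ (with vector $\ket{a\overbar{b}}$), I obtain the central equivalence: there exists $\lambda > 0$ with $\rho - \lambda\ketbra{ab}{ab} \in \operatorname{PPT}$ if and only if both $\ket{ab} \in \operatorname{Ran}(\rho)$ and $\ket{a\overbar{b}} \in \operatorname{Ran}(\rho^\Gamma)$. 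The ``only if'' direction is direct; for ``if'' I take $\lambda$ to be the minimum of the two thresholds supplied by the lemma for the two separate PSD constraints, so that a single $\lambda$ satisfies both at once.

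Finally I would translate the scalar $\lambda$ into the un-normalized phrasing of the proposition. Since $\ket{ab}$ ranges over all un-normalized product vectors, $\lambda\ketbra{ab}{ab} = \ketbra{\sqrt{\lambda}\,ab}{\sqrt{\lambda}\,ab}$ with $\sqrt{\lambda}\ket{ab}$ still a product vector, so the parametrized family and the un-normalized statement carry the same information. Hence ``for all un-normalized $\ket{ab}$, $\rho - \ketbra{ab}{ab} \notin \operatorname{PPT}$'' is exactly the negation of ``there is a product vector $\ket{ab} \in \operatorname{Ran}(\rho)$ with $\ket{a\overbar{b}} \in \operatorname{Ran}(\rho^\Gamma)$'', which is precisely the defining condition for $\rho$ (already assumed PPT entangled) to be an edge state. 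Reading the resulting chain of equivalences in both directions gives the stated characterization.

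The main obstacle is really the quantifier bookkeeping rather than any deep estimate: I must ensure that one and the same $\lambda$ makes both $\rho - \lambda\ketbra{ab}{ab}$ and its partial transpose positive semidefinite (handled by the minimum of two thresholds), and that the passage from the family $\{\rho - \lambda\ketbra{ab}{ab}\}_{\lambda > 0}$ to the un-normalized formulation is lossless. Everything else reduces to the range lemma and the elementary partial-transpose identity.
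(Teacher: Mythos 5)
Your proposal is correct. The paper does not actually prove this proposition --- it is quoted as a known result from the cited reference --- but your range-based argument is precisely the standard proof underlying that characterization: the key lemma that $A - \lambda\ketbra{v}{v} \geq 0$ for some $\lambda > 0$ iff $\ket{v} \in \operatorname{Ran}(A)$, applied to $\rho$ and $\rho^\Gamma$ via the identity $(\ketbra{ab}{ab})^\Gamma = \ketbra{a\overbar{b}}{a\overbar{b}}$, with the minimum-of-thresholds and rescaling bookkeeping handled exactly as it should be (noting only that the quantifier must implicitly exclude $\ket{ab} = 0$, since $\rho - 0 = \rho$ is trivially PPT).
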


\begin{remark}
    There exist particular faces of quantum states well-suited to the application of the range-criterion, i.e, the set of states that have the range only in \emph{entangled} subspaces. This face can be constructed using ``Unextendible Product Basis", or UPBs. Any PPT state that is orthogonal to the UPBs is automatically PPT entangled as it violates the range criterion \cite{divincenzo2003unextendible, terhal2001family,Bravyi2004Unextendible}.
\end{remark}

It is clear from the definition that all extremal PPT states are edge-states. But it is known that not all edge states are extremal \cite{Lewenstein2000edge}. There are only a few systematic constructions of PPT entangled edge states \cite{choi2020entangled} known in the literature. The ranks of PPT entangled edge states have been amply scrutinized, and a complete classification obtained in $3 \otimes 3$ \cite{kye2012classification}. Such states are also useful to construct entanglement witnesses \cite{lewenstein2000optimization}, and any PPT state can be decomposed into a PPT edge state and a separable state \cite{lewenstein2000optimization}, making them useful to analyse from the perspective of entanglement theory.

\subsection{Local Diagonal Orthogonal Invariant States}

In this section, we provide a small introduction to local diagonal orthogonal invariant states, or in short, LDOI states. These states are characterized by their local invariance, 
$$(O \otimes O) X (O \otimes O) = X$$ for all $O$ in the diagonal orthogonal matrices. This is the group of matrices, 
$$\mathcal{DO}_d := \{\operatorname{diag}(x_1,x_2 \ldots x_n) \mid x_{i} \in \{-1,1\}\}.$$ These are exactly the states parametrized by a matrix triple $(X,Y,Z)$ with equal diagonals. 
\begin{align*}
\label{eq:ldoi-state}
    \rho_{X,Y,Z}  := \sum_{i,j}X_{ij}\ketbra{ij}{ij} &+ \sum_{i \neq j}Y_{ij}\ketbra{ii}{jj} \\ &+\sum_{i \neq j}Z_{ij}\ketbra{ij}{ji}.
\end{align*}

In particular, their PSD and PPT properties can be readily inferred from the matrices $(X, Y, Z).$ The matrix is PSD if and only if $Y$ is self-adjoint, $\forall i,j \in [d], \, \, X_{ij} \geq |Z_{ij}|^2$ and $Y \in \mathsf{PSD}_d.$ For the PPT property, we have the following result,
\begin{theorem}[\cite{singh2021diagonal}]
\label{thm:ppt-ldoi}
    The family of quantum states $\rho_{X,Y,Z}$ is PPT if and only if,  
    \begin{itemize}
        \item $\forall i,j \in[d] \quad X_{ij} \geq |Y_{ij}|^2, |Z_{ij}|^2$,
        \item $Y, Z \in \mathsf{PSD}_d.$
    \end{itemize}
\end{theorem}
\begin{theorem}[\cite{singh2021diagonal}]
\label{thm:sep-ldoi}
    The family of quantum states $\rho_{X, Y, Z}$ is separable if and only if there exists a finite set of vectors $\{\ket{v_k},\ket{w_k}\}_{k\in I}\subset \C{d}$ such that
\begin{align*}
      &X = \sum_{k\in I} |v_k \odot \overbar{v_k}\rangle\langle w_k \odot \overbar{w_k}| \\ &Y = \sum_{k\in I} \ketbra{v_k \odot w_k}{v_k \odot w_k}, \\ &Z= \sum_{k\in I} \ketbra{v_k \odot \overbar{w_k}}{v_k \odot \overbar{w_k}}.
\end{align*}
\end{theorem}

In particular, deciding if a LDOI state is separable or entangled is still NP-hard \cite{singh2021diagonal}, and there exist PPT entangled states. For properties of LDOI states and the separability problem, we recommend \cite{singh2021diagonal, singh2020entanglement, gulati2025entanglement}. Even in the smallest non-trivial $3 \otimes 3$ system, there is still no complete characterization of entanglement available for these states. Finally, we show that the popular CCNR (computable cross norm or realignment) criterion for LDOI states can be understood as a positivity condition on the triple of matrices.

\begin{theorem}[\cite{singh2021diagonal}]
\label{thm:ccnr-ldoi}
    Let $||X||_1 = \sum^d_{i,j=1} |X_{ij}|$ and $||X||_{\operatorname{tr}} = \operatorname{tr}(\sqrt{X^*X}).$ Then the LDOI state $\rho_{X,Y,Z}$ satisfies the CCNR  criterion if and only if 
    $$||X||_1 - ||X||_{\operatorname{tr}} \geq 2 \sum_{1\leq i < j \leq d} \operatorname{max} |B_{ij}|, |C_{ij}|.$$
\end{theorem}

\section{Main results}
\label{sec:main-results}
Let $\rho$ be a mixed (but un-normalised) bipartite quantum state in $\M{d} \otimes \M{d}$. Our goal is to provide the necessary conditions for separability using the rank-1 generated property. Note that a rank-1 PSD matrix $\ketbra{v}{v}$ is PPT if and only if $\ket{v}= \ket{v_1} \otimes \ket{v_2}$. Hence, the cone of separable states, or SEP, is exactly equal to $\operatorname{R}_1[\text{PPT}]$, that is, the rank-1 generated matrices in PPT states. 

The constraint of PPT is \emph{not} a linear matrix constraint. We want to look at mappings of these cones to $n\times n$ matrices that allow us to reduce the PPT condition to \emph{linear} constraints. Let us state our main idea as the next theorem.

\begin{theorem}
\label{thm:rank-1-generated-ent}
For all $K \in \mathcal{M}_{n,d^2}$, we define the following positive map $Z_K : \mathcal{M}_{d} \otimes \mathcal{M}_d \cong \mathcal{M}_{d^2} \rightarrow \mathcal{M}_{n}$
    \begin{align*}
        Z_K(\rho) := K\rho K^*.
    \end{align*} Let $\mathcal{F}$ be a face of the bipartite psd matrices $(\mathcal{M}_d \otimes \mathcal{M}_d)^+$ and $\mathcal{F}_{PPT} := \mathcal{F} \, \cap \, \mathsf{PPT}$. Furthermore, assume that $Z_K(\mathcal{F}_{PPT}) \in \mathcal{M}^+_n(\mathcal{L})$. Then the following statements are true. 

\begin{enumerate}
    \item We have for all separable states on $\mathcal{F}$, the matrix $Z_K(\rho) \in \operatorname{R}_1[\mathcal{M}^+_n(\mathcal{L})]$. This is a \emph{necessary criterion} for separability for states in $\mathcal{F}$.
    
    \item If for a PPT state on $\mathcal{F}$ such that, $Z_K(\rho) \in \operatorname{ext} \mathcal{M}^+_n(\mathcal{L})$ and has rank $\geq 2$, the state $\rho$ is a PPT entangled edge state.

\end{enumerate}

\end{theorem}

\begin{proof} 

Firstly, note that $Z_K$ maps positive matrices into positive matrices, and rank-1 matrices to rank-1 matrices as $Z_K(\ketbra{v}{v}) = (K\ket{v})(K\ket{v})^*$. If $\rho$ is a separable state, it has a decomposition into rank-1 states that are PPT. 
\begin{equation}
    \rho = \sum^N_{k=1} \ketbra{v_k}{v_k} \otimes \ketbra{w_k}{w_k}. \end{equation} 
Define $\rho_k :=\ketbra{v_k}{v_k} \otimes \ketbra{w_k}{w_k},$ that is a rank-1 PPT state. Moreover, if we assume that $\rho$ is separable and lies on the face $\mathcal{F}$, it implies that each $\rho_k \in \mathcal{F}$, hence by assumption in the theorem, each $Z_K(\rho_k) \in \mathcal{M}^+_d(\mathcal{L})$. Therefore, combined with the rank-1 preserving quality of the map, we conclude that $Z_K(\rho) = \sum^N_{k=1} Z_K(\rho_k) \in \operatorname{R}_1[\mathcal{M}^+_d(\mathcal{L})]$ as $\rho_k$ and hence $Z_K(\rho_k)$ is rank-1. This concludes the proof of the first part of the theorem. For the second part, note that for any product state $\ket{ab}$, the matrix $Z_K (\ketbra{ab}{ab}) = \ketbra{v}{v} \in \mathcal{M}^+_d(\mathcal{L})$. Let $\rho$ satisfy the condition in part 2. Assume the converse, that there is some product state $\ket{ab}$ such that $\rho - \ketbra{ab}{ab} = \delta \in \operatorname{PPT}$. Then $\rho = \ketbra{ab}{ab} + \delta \implies Z_K(\rho) = \ketbra{v}{v} + Z_K(\delta)$. Since $Z_K(\rho)$ is extremal, $Z_K(\rho) \propto \ketbra{v}{v}$, that violates the fact that it is rank $\geq 2$.
\end{proof}

We also state the analogous theorem that maps the \emph{partial transpose} of the state. The proof of this theorem is similar to that of the \cref{thm:rank-1-generated-ent}.
\begin{theorem}
\label{thm:rank-1-generated-ent-transpose}
For all $K \in \mathcal{M}_{n,d^2}$, we define the following positive map $Z^\Gamma_K : (\mathcal{M}_{d} \otimes \mathcal{M}_d) \cong \mathcal{M}_{d^2} \rightarrow \mathcal{M}_{n}$
    \begin{align*}
        Z^\Gamma_K(\rho) := 
        (Z_K \circ \Gamma)(\rho) =  K\rho^\Gamma K^*
    \end{align*} Let $\mathcal{F}$ be a face of the bipartite psd matrices in $\mathcal{M}_d \otimes \mathcal{M}_d$ and $\mathcal{F}_{PPT} := \mathcal{F} \, \cap \, \mathsf{PPT}$. Furthermore, assume that $Z^\Gamma_K(\mathcal{F}_{PPT}) \in \mathcal{M}^+_n(\mathcal{L})$. Then the following statements are true. 

\begin{enumerate}
    \item We have for all separable states on $\mathcal{F}$, the matrix $Z^\Gamma_K(\rho) \in \operatorname{R}_1[\mathcal{M}^+_n(\mathcal{L})]$. This is a \emph{necessary criterion} for separability for states in $\mathcal{F}$.
    
    \item If for a PPT state on $\mathcal{F}$, $Z^\Gamma_K(\rho) \in \operatorname{ext} \mathcal{M}^+_n(\mathcal{L})$ and has rank $\geq 2$, the state $\rho$ is a PPT entangled edge state.

\end{enumerate}

\end{theorem}
\subsection{Detection of PPT entangled quantum states}
\label{sec:entanglement-detection}
We use the following strategy based on the \cref{thm:rank-1-generated-ent} to detect PPT entangled states.

\begin{enumerate}
    \item Let $\mathcal{F}$ be a face of the bipartite $(\mathcal{M}_{d} \otimes \mathcal{M}_d)^+$ matrices. 
    \item We choose $K$ such that $Z_K(\mathcal{F}_{PPT}) \in \mathcal{M}^+_n(\mathcal{L})$ or $Z^\Gamma_K(\mathcal{F}_{PPT}) \in \mathcal{M}^+_n(\mathcal{L})$. 
    \item By \cref{thm:rank-1-generated-ent}, when the state $\rho$ is separable, $Z_K(\rho) \in \operatorname{R}_1[\mathcal{M}^+_n(\mathcal{L})]$ or $Z^\Gamma_K(\rho) \in \operatorname{R}_1[\mathcal{M}^+_n(\mathcal{L})]$ (resp.).
    \item We use the gap in the inclusion $\operatorname{R}_1[\mathcal{M}^+_n(\mathcal{L})] \subsetneq \mathcal{M}^+_n(\mathcal{L})$ to detect PPT entangled states. 
\end{enumerate}

In some cases that we consider in this article, this problem of rank-1 generated cones has been well-studied in the literature, even though the connection to PPT entanglement has not been considered.

\subsubsection{Sparse States}
\label{sec:psd-graphs}
For a $n \times n$ matrix $X $, we define the (undirected) graph of a matrix $G(X)$ with vertices in $[n]$ and the edge set $\{\{i,j\} \, | \,  X_{ij} \cdot X_{ji} > 0\}$, i.e the graph has an edge whenever $X_{ij}$ and $X_{ji}$ are both non-zero. This allows us to define the linear space of self-adjoint matrices with a sparsity pattern given by an undirected graph $H$, 
\begin{equation*}
    \mathcal{M}_{n}(H) := \{X \in \M{n} \, |  \, X^* = X, \, G(X) \subseteq H\}
\end{equation*} Note that for the complete graph on $d$ vertices, which we denote by $K_n$, this space $\mathcal{M}_{n}(K_n)$ is the real vector space of $n \times n$ self adjoint matrices. In the other extreme case, when the graph is empty (denoted by $E_n$), $\mathcal{M}_{n}(E_n)$ is the vector space of diagonal matrices with real entries. We denote the set of sparse PSD matrices \cite{agler1988positive, paulsen1989Schur} as $$\mathcal{M}_n^+(H) := \mathcal{M}^+_{n} \, \cap \, \mathcal{M}_{n}(H)$$ Let $\{\ket{i_A}\}^n_{i=1}$ and $\{\ket{j_A}\}^n_{j=1}$ be two orthogonal sets of vectors of size $n$ each. We define the following $n \times n$ matrix $D(\rho)$ entrywise as 
    $$D(\rho)_{ij} := \bra{i_A j_B}\rho\ket{i_A j_B}.$$

To understand the sparsity pattern  of this matrix $D(\rho)$, we look at the graph $G(D(\rho))$. Note that elements of $D(\rho)$ are just the diagonal elements of the density matrix. Then, we define the face of the cone using the sparsity pattern of matrix $A$, and a graph $H$ on $n$ vertices. 

\begin{equation}
    \mathcal{F}[H]:= \{\rho \in (\mathcal{M}_d \otimes \mathcal{M}_d)^+ \, | \, G\big(D(\rho)\big) \subseteq H\}
\end{equation}

For the case of triangle-free graphs in \cite{singh2020entanglement}, this class of states was referred to as triangle-free quantum states. Let us quickly show that this is a face of the cone. Firstly, note that $D(\rho)$ is a non-negative matrix for all PSD matrices $\rho$. 
Let us pick an element $\rho \in \mathcal{F}_H$, and write a decomposition into PSD matrices, 
$$\rho = \rho_1 + \rho_2 \implies D(\rho) = D(\rho_1) + D(\rho_2)$$ If $D(\rho)_{ij} = 0$, it implies that $k = \{1,2\},\, D(\rho_i)_{ij} = 0$, and hence $G\big(D(\rho_{k})\big) \subseteq G\big(D(\rho)\big) \subseteq H$, showing that $\rho_1, \rho_2 \in \mathcal{F}_H$. We show that for all PPT states on this face, the following constraints hold.  

\begin{theorem}
\label{thm:sparse-PSD}
    Let $\rho \in \mathcal{F}_H$ be a bipartite quantum state satisfying the PPT condition, then, by choosing,  
    \begin{itemize}
        \item $K := \sum^n_{i=1}\ketbra{i}{i_Ai_B} \implies Z_{K}(\rho) \in \mathcal{M}_n^+(H)$
        \item $K := \sum^n_{i=1}\ketbra{i}{i_A\overbar{i_B}} \implies Z^\Gamma_{K}(\rho) \in \mathcal{M}_n^+(H)$ 
    \end{itemize}
\end{theorem}

\begin{proof}
     Let $\rho \in \mathcal{F}_H$ be a bipartite quantum state satisfying the PPT condition. We have, 
     \begin{align*}
     Z_{K}(\rho) = \sum^n_{i,j=1}\braket{i_A i_B |\rho |j_A j_B} \ketbra{i}{j}.
     \end{align*}
    For any PPT state, $\rho^\Gamma$ is PSD. Hence, any $2 \times 2$ block of $\rho^\Gamma$ is PSD. This implies for all $i,j$, 
    \begin{align*}
    \braket{i_A \overbar{j_B}|\rho^\Gamma |i_A \overbar{j_B}}\braket{j_A \overbar{i_B}|\rho^\Gamma |j_A \overbar{i_B}} &\geq \abs{\braket{i_A \overbar{j_B}|\rho^\Gamma |j_A \overbar{i_B}}}^2 \\
    \braket{i_Aj_B|\rho|i_Aj_B}\braket{j_Ai_B|\rho|j_Ai_B} &\geq \abs{\braket{i_A i_B|\rho |j_A j_B}}^2
    \\
    \iff D(\rho)_{ij} D(\rho)_{ji} &\geq |Z_K(\rho)|^2_{ij} .
    \end{align*}
Therefore, we have $G(Z_K(\rho)) \subseteq G(D(\rho)) \subseteq G(H).$ The proof of the second statement is similar.
\end{proof}

Note that the intersection $\mathcal{M}^+_{n}(H)$ is a convex cone with constraints encoded in the sparsity pattern of this matrix using the graph $H$. Now, by using our main theorem, \cref{thm:rank-1-generated-ent} and applying it to the previous result, \cref{thm:sparse-PSD}, we can derive the following necessary criterion for separability.

\begin{corollary}
\label{cor:sparse-psd-rank-1}
Let $K := \sum^n_{i=1}\ketbra{i}{i_Ai_B}.$ Let $\rho \in \mathcal{F}_H$ be a separable quantum state, then, 
     \begin{equation}
        Z_{K}(\rho)\in \operatorname{R}_1 [\mathcal{M}^+_{n}(H)]
     \end{equation}
\end{corollary}

Therefore, we find that the matrix $\sum^n_{i,j=1}\braket{i_A i_B |\rho |j_A j_B} \ketbra{i}{j}$ is rank-1 generated in $\mathcal{M}^+_{n}(H)$ for all separable states in the face of sparse states, $\mathcal{F}_H.$ For the empty graph, $E_n$ on $n$ vertices, $\operatorname{R}_1[\mathcal{M}^+_{n}(E_n)]$ is the set of diagonal matrices with positive entries. On the other extreme, for the complete graph, which we denote as $K_n$, we  have $\operatorname{R}_1[\mathcal{M}^+_{n}(K_n)] = \mathcal{M}^+_{n} = \mathcal{M}^+_{n}(K_n)$ is the set of PSD matrices. In both of these cases, the cone is equal to its rank-1 generated sub-cone. There is actually a complete characterization of all such graphs. 

\begin{theorem}[\cite{paulsen1989Schur}]
The cone $\mathcal{M}^+_{n}(G)$ is rank-1 generated, i.e  $\operatorname{R}_1[\mathcal{M}^+_{n}(G)] = \mathcal{M}^+_{n}(G)$ if and only if the graph is chordal. 
\end{theorem}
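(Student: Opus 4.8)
The plan is to prove both implications through an explicit description of the rank-one elements of the sparse cone. First I would observe that a rank-one matrix $\ketbra{v}{v}$ lies in $\mathcal{M}^+_d(G)$ exactly when $\supp(v)$ is a clique of $G$: the entry $(vv^*)_{ij}=v_i\overline{v_j}$ is nonzero precisely when both $i,j\in\supp(v)$, so the sparsity constraint forces every pair of support indices to be an edge. Consequently $\operatorname{R}_1[\mathcal{M}^+_d(G)]$ is exactly the cone of matrices expressible as $\sum_\alpha P_\alpha$ with each $P_\alpha\succeq 0$ supported on a clique (a PSD matrix supported on a clique being itself a sum of such rank-one terms). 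The theorem thus becomes: every PSD matrix with pattern $G$ decomposes into clique-supported PSD blocks if and only if $G$ is chordal.

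For the direction ``chordal $\Rightarrow \operatorname{RG}_1$'' I would induct on $d$ using a simplicial vertex, which a chordal graph always possesses (Dirac's lemma). Let $v$ be simplicial, so $N(v)$ is a clique and $C=\{v\}\cup N(v)$ is a clique; write $R=[d]\setminus C$. Since $v$ has no neighbours in $R$, the pattern forces $X_{vR}=0$, so $X$ has a block form with a zero $v$--$R$ corner. The key move is to peel off the single PSD block $Y$ supported on $C$ whose $N(v)$--$N(v)$ part is the Schur-complement filler $X_{N(v),v}X_{vv}^{+}X_{v,N(v)}$ (taking $Y=0$ when $X_{vv}=0$, which forces $X_{v,N(v)}=0$ by positivity). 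Then $Y\succeq 0$ with vanishing Schur complement, while $Z:=X-Y$ is exactly the Schur complement of $X$ with respect to $X_{vv}$, hence PSD; moreover $Z$ is supported on $[d]\setminus\{v\}$, whose induced subgraph is again chordal, and the fill-in stays inside the clique $N(v)$, so $Z\in\mathcal{M}^+_{d-1}(G[\,[d]\setminus\{v\}\,])$. The inductive hypothesis decomposes $Z$, and $Y$ is clique-supported, giving $X\in\operatorname{R}_1$.

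For the converse, a non-chordal $G$ contains an induced chordless cycle $C_k$ with $k\ge 4$ on a vertex set $W$. I would first reduce to the cycle: if $\tilde X\in\mathcal{M}^+_k(C_k)$ is not a sum of vertex- and edge-supported PSD matrices, then padding it with zeros off $W$ yields $X\in\mathcal{M}^+_d(G)$, and restricting any clique decomposition $X=\sum_\alpha\ketbra{v_\alpha}{v_\alpha}$ to the $W\times W$ block would decompose $\tilde X$ into terms supported on $\supp(v_\alpha)\cap W$, which is a clique of the induced subgraph $C_k$ and hence (no triangles) a single vertex or edge --- a contradiction. So it suffices to build $\tilde X$ on the cycle. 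The crucial observation is that edge-decomposability depends only on the moduli $|X_{i,i+1}|$, not their phases: an edge block $\bigl(\begin{smallmatrix} p_i & X_{i,i+1}\\ \overline{X_{i,i+1}} & q_i\end{smallmatrix}\bigr)\succeq 0$ requires only $p_iq_i\ge |X_{i,i+1}|^2$ together with the diagonal split $p_i+q_{i-1}=X_{ii}$. Taking unit diagonal and all $|X_{i,i+1}|=c$, multiplying these constraints around the cycle and applying $p_{i+1}q_i\le\tfrac14$ (AM--GM) shows that decomposability forces $c\le\tfrac12$. Full positivity, by contrast, does see the phases: choosing the off-diagonal signs so that the product of edge phases around the cycle (the holonomy) makes $\lambda_{\min}$ of the adjacency part equal to $-2\cos(\pi/k)>-2$ --- frustrating even cycles, leaving odd cycles unfrustrated --- the matrix $I+cM$ stays PSD up to $c=1/(2\cos(\pi/k))>\tfrac12$. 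Any $c$ strictly between $\tfrac12$ and $1/(2\cos(\pi/k))$ then gives a PSD matrix outside $\operatorname{R}_1$.

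I expect the main obstacle to be this converse: one must isolate the invariant that distinguishes $\mathcal{M}^+_k(C_k)$ from its rank-one hull. The decisive insight --- that clique (here, edge) decomposability is phase-blind while positivity is not --- is what opens the gap, and the quantitative AM--GM and eigenvalue thresholds are what make the separation rigorous. The reduction from a general non-chordal graph to a single induced cycle via restriction to the $W\times W$ block is routine but must be stated carefully, so that induced (chordless) cliques rather than merely subgraph cliques are what appear.
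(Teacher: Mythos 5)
Your proposal is correct, but it takes a genuinely different route from the paper: the paper does not prove this statement at all --- it imports it wholesale from \cite{paulsen1989schur}, and the machinery it borrows from that reference (sketched in \cref{appsec:schur-maps}) is dual in nature, characterizing $X \in \operatorname{R}_1[\mathcal{M}^+_d(G)]$ by positivity of all Schur products $Z \odot X$ with $Z \in \mathcal{M}^*_d(G)$, and then invoking the equivalence between chordality and completability of partial PSD matrices \cite{grone1984positive}. Your argument is instead direct and self-contained on both sides, and the details check out. For chordal $\Rightarrow \operatorname{RG}_1$ you run a simplicial-vertex induction (in effect a zero-fill-in Cholesky step), peeling off the rank-one clique-supported block $X_{\cdot v}X_{vv}^{+}X_{v\cdot}$; you correctly handle the degenerate case $X_{vv}=0$ and the fact that the fill-in of the Schur complement stays inside the clique $N(v)$, so induction applies to the (still chordal) induced subgraph. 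For the converse, the zero-padding reduction is sound --- vanishing diagonal entries off $W$ force every rank-one term into $W$, and cliques of $G$ inside $W$ are cliques of the induced cycle, hence vertices or edges --- and the key separation is your observation that edge-decomposability is phase-blind (AM--GM around the cycle forces $c \le \tfrac12$) while positivity is holonomy-sensitive (the frustrated cycle stays PSD up to $c = 1/(2\cos(\pi/k)) > \tfrac12$). As for what each approach buys: the cited duality proof is what powers the paper's witness apparatus in the appendix (Schur multiplier maps as analogues of entanglement witnesses, incompletable matrices as analogues of indecomposable maps), whereas your proof is elementary and constructive, handing over explicit matrices in $\mathcal{M}^+_k(C_k)\setminus\operatorname{R}_1[\mathcal{M}^+_k(C_k)]$ with a quantitative gap; in the context of this paper such matrices immediately generate concrete PPT-entangled states on the faces $\mathcal{F}_{C_k}$, nicely complementing the comparison-matrix test of \cref{thm:triangle-free-graphs}.
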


For all non-chordal graphs, these results allow us to detect entangled states by checking that the matrix $Z_K{(\rho)}$ is not rank-1 generated. Although in general a complex problem, a semi-definite characterisation of this rank-1 generated property can be shown for triangle-free graphs based on the so-called comparison matrices. This characterization has also recently been made in \cite{singh2020entanglement}. For every matrix $X$, we can define the comparison matrix $M(X)$ entrywise such that $[M(X)]_{ii} = |X_{ii}|$ and  $[M(X)]_{ij} = -|X_{ij}|$ for all $i \neq j$. Then, we have the following result.  
\begin{theorem}[Triangle-free graphs]
\label{thm:triangle-free-graphs}
    Assume $G$ is a triangle free graph, and $X \in \mathcal{M}^+_{n}(G)$. Then it is true that, 
    $$X \in \operatorname{R}_1[\mathcal{M}^+_{n}(G)] \iff M(X) \geq 0$$
\end{theorem}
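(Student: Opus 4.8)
The plan is to prove the two implications separately, with the triangle-free hypothesis doing its real work in the forward direction. The observation I would use throughout is a description of the rank-$1$ generators of $\mathcal{M}^+_d(G)$: a rank-one matrix $\ketbra{v}{v}$ lies in $\mathcal{M}^+_d(G)$ precisely when $v_i\bar v_j=0$ for every non-edge $\{i,j\}$, i.e. when $\supp(v)$ is a clique of $G$. Since $G$ is triangle-free, every clique has at most two vertices, so each generator is supported on a single vertex or a single edge. Hence any $X\in\operatorname{R}_1[\mathcal{M}^+_d(G)]$ can be written as $X = D_0 + \sum_{e\in E(G)} S^e$, where $D_0\ge 0$ is diagonal (collecting the vertex-supported terms) and each $S^e$ is a $2\times 2$ positive semidefinite block supported on the edge $e$.

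For the forward direction I would exploit that the comparison map becomes \emph{additive} on this decomposition. Because distinct edges occupy disjoint off-diagonal positions and all diagonal entries are non-negative, a direct entrywise check gives the identity $M(X) = D_0 + \sum_{e} M(S^e)$, where $M(S^e)$ denotes the $2\times 2$ comparison matrix of $S^e$ embedded back into $\mathcal{M}_d$. The conclusion then rests on the elementary fact that the comparison matrix of any $2\times 2$ positive semidefinite matrix $\left(\begin{smallmatrix} p & z\\ \bar z & q\end{smallmatrix}\right)$ is again positive semidefinite, since it has non-negative trace and determinant $pq-|z|^2\ge 0$. Thus $M(X)$ is a sum of positive semidefinite matrices and is itself positive semidefinite. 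Triangle-freeness is essential precisely here: a generator supported on a triangle would contribute to three off-diagonal positions simultaneously, and the $3\times 3$ comparison matrix of a rank-one matrix (e.g. that of the all-ones vector) need not be positive semidefinite, so the additivity identity would break.

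For the converse I would start from $M(X)\ge 0$ and first produce a decomposition of the real matrix $M(X)$ itself. Since $M(X)$ is symmetric, positive semidefinite, and has non-positive off-diagonal entries, it is a (possibly singular) symmetric $M$-matrix, and by standard $M$-matrix theory it is scaled diagonally dominant: there is a positive diagonal $D$ such that $N:=DM(X)D$ is diagonally dominant with the same sign pattern. A diagonally dominant symmetric $Z$-matrix decomposes explicitly as $N=\sum_{\{i,j\}}|N_{ij}|(\ket{i}-\ket{j})(\bra{i}-\bra{j}) + \sum_i r_i\ketbra{i}{i}$ with leftover weights $r_i = N_{ii}-\sum_{j}|N_{ij}|\ge 0$; conjugating by $D^{-1}$ yields a decomposition of $M(X)$ into rank-one pieces supported on vertices and on edges $\{i,j\}$ with $M(X)_{ij}\neq 0$, all of which are edges of $G$. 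This shows $M(X)\in\operatorname{R}_1[\mathcal{M}^+_d(G)]$. To transport this back to $X$, I would use that conjugation by a diagonal unitary preserves $\mathcal{M}^+_d(G)$ and the support of each generator while leaving $M(\cdot)$ invariant: each generator $\ketbra{u_k}{u_k}$ sits on one edge $\{i,j\}$, and I reassign a single phase to one endpoint of $u_k$, depending only on the edge, so that its off-diagonal entry is corrected to $X_{ij}=-e^{i\theta_{ij}}M(X)_{ij}$. Since position $(i,j)$ receives contributions only from edge-$\{i,j\}$ generators and the diagonal magnitudes are untouched, the re-phased generators sum exactly to $X$.

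I expect the main obstacle to be the converse, specifically justifying the scaled-diagonal-dominance step for a $M(X)$ that may be singular and reducible; this is where one must invoke genuine $M$-matrix structure (for instance a non-negative Perron vector in the kernel of each irreducible block supplying the scaling $D$) rather than naive diagonal dominance, which can fail for positive semidefinite $Z$-matrices. The forward direction and the re-phasing step are, by comparison, essentially bookkeeping once the clique description of the generators and the $2\times 2$ comparison fact are in place.
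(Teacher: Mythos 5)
Your proof is correct, and its overall architecture matches the paper's: the forward direction rests on the observation that triangle-freeness forces every rank-one generator of $\mathcal{M}^+_d(G)$ to be supported on a vertex or a single edge, combined with the $2\times 2$ fact that $B \geq 0 \iff M(B) \geq 0$; the converse reduces to a diagonal scaling to diagonal dominance followed by an explicit decomposition into vertex- and edge-supported rank-one pieces. Two differences are worth recording. First, in the forward direction you group the generators edge by edge into $2\times 2$ PSD blocks $S^e$ \emph{before} applying the comparison map. This is a genuine tightening: the additivity identity as written in the paper, $M(X)=\sum_k \ketbra{v_k}{v_k}+\sum_k M(\ketbra{z_k}{z_k})$, can fail when several generators share an edge and their off-diagonal phases partially cancel (one then only has an entrywise inequality, which does not by itself preserve positive semidefiniteness), whereas after grouping the identity $M(X)=D_0+\sum_e M(S^e)$ is exact, and the $2\times 2$ determinant argument applies to each block $S^e$ whether or not it is rank one. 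Second, for the converse the paper simply invokes \cite[Theorem 3.9]{singh2020ppt2}, which gives $M(X)\geq 0 \iff X = DBD$ with $B$ diagonally dominant and $D$ diagonal non-negative; since that formulation keeps the complex phases of $X$, the paper can decompose $B$ directly into edge blocks and conjugate by $D$. You instead prove the scaling statement yourself for the real matrix $M(X)$, correctly identifying that the singular/reducible case requires Perron--Frobenius structure on each irreducible block (a positive kernel vector supplies $D$ for singular irreducible blocks, $A^{-1}e>0$ for nonsingular ones), and you then re-phase each edge generator to transport the decomposition of $M(X)$ back to $X$. Your route is self-contained where the paper's is a citation, at the cost of the block-by-block Perron analysis and the re-phasing bookkeeping; both of these steps are sound, the phase correction being consistent precisely because each off-diagonal position receives contributions from generators of a single edge only.
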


\begin{proof}
    The proof of the implication follows the strategy in \cite[Theorem III.2]{singh2020entanglement}. For any rank 1 element $\ketbra{v}{v} \in \mathcal{M}^+_n(G)$, the $\operatorname{supp}(v)$ has to be a clique of the graph $G$. Therefore, for triangle-free graphs, since we only have the largest cliques of size $2$, we have $|\operatorname{supp}(v) |\leq 2$. Therefore, for all $X \in \operatorname{R}_1[\mathcal{M}^+_{n}(G)]$, i.e rank-1 generated elements, we have (denote $\operatorname{supp}(v) =: \sigma(v)$)
    \begin{equation}
        X = \sum_{\sigma (v_k) = i_k} \ketbra{v_k}{v_k} + \sum_{\sigma (z_k) = \{i_k, j_k\}} \ketbra{z_k}{z_k}
    \end{equation}
    and hence the comparison matrix can be computed as, 
    \begin{equation}
    M(X) = 
    \sum_{\substack{\sigma(v_k) = i_k}} 
    \ketbra{v_k}{v_k}
    + 
    \sum_{\substack{\sigma(z_k) = \{i_k, j_k\}}} 
    M\left( \ketbra{z_k}{z_k} \right)
\end{equation}
For any matrix with a support on a $2 \times 2$ block, $B \geq 0 \iff M(B) \geq 0$. Hence, from this we conclude that $M(\ketbra{z_k}{z_k}) \geq 0$, and hence also $M(X) \geq 0$. This shows the implication. To show the other direction, we use the result from \cite[Theorem 3.9]{singh2020ppt2} which states that for any matrix $X \in \mathcal{M}^+_n$, $M(X) \geq 0 \iff X = D B D$ where $B \in \mathcal{M}^+_n$ is a diagonally dominant matrix such that $G(B) = G(X)$ and $D$ is a strictly positive diagonal matrix. 
    Then by expanding, $B = \sum_{\{i, j\} \in E(G)} B^{(i,j)} + C$ where \begin{align*}
    B^{(i,j)} = \begin{bmatrix} |B_{ij}| & B_{ij} \\ B_{ij}^* & |B_{ij}| \end{bmatrix}_{i,j} &= |B_{ij}| \big(\ket{i} \bra{i} + \ket{j}\bra{j}\big) \\ &+ B_{ij} \ket{i}\bra{j} + B^*_{ij} \ket{j} \bra{i}
    \end{align*}and $C$ is a diagonal matrix with $C_{ii}:= B_{ii} - \sum_{i \neq j} |B_{ij}|.$ Note that $C_{ii} \geq 0$ as $B$ is diagonally dominant. Also for all $B^{(i,j)}$, we have $\operatorname{det} \Big(B^{(i,j)}\Big) = 0$, which implies it is rank-1 and hence $B^{(i,j)} = \ketbra{z_{ij}}{z_{ij}}$ for $z_{ij} \in \mathbb{C}^n$ with the support $\{i,j\} \in E.$  Now for any positive-diagonal matrix $D$,
    $D\ketbra{z_{ij}}{z_{ij}}D \in \mathcal{M}^+_n(G)$ as $D \ket{z_{ij}}$ has support $\{i,j\} \in E$. Moreover, $DCD$ is a diagonal matrix, and hence trivially belongs to $\mathcal{M}^+_n(G).$ 
\end{proof}

\begin{center}

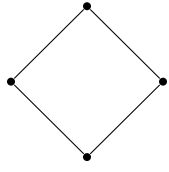
\begin{figure}
\centering
\begin{tikzpicture}[
    vertex/.style={
        circle,         
        fill=black,     
        inner sep=0pt,  
        minimum size=3pt 
    }
]

\node[vertex] (v1) at (90:1cm)  {}; 
\node[vertex] (v2) at (180:1cm) {}; 
\node[vertex] (v3) at (270:1cm) {}; 
\node[vertex] (v4) at (0:1cm)   {}; 

\draw (v1) -- (v2) -- (v3) -- (v4) -- (v1);

\end{tikzpicture}

\caption{The graph $\mathcal{C}_4$ is the smallest non-chordal graph. It is also triangle-free, hence for any state with $G(A) \subseteq \mathcal{C}_4$, we can apply the theorem \cref{thm:triangle-free-graphs} to detect entanglement}

\label{fig:graph-non-chordal}
\end{figure}
\end{center}

\begin{corollary}
\label{cor:induced-graph}
Assume that graph $H$ on vertex set $I_H$ is a triangle-free induced subgraph of $G$. Then, for every $X \in \operatorname{R}_1[\mathcal{M}^+_{|I_H|}(G)]$, we have 
$$M(X[I_H]) \geq 0$$
\end{corollary}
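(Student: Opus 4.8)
The plan is to reduce the statement to \cref{thm:triangle-free-graphs} by showing that the principal submatrix $X[I_H]$ itself lies in $\operatorname{R}_1[\mathcal{M}^+_{|I_H|}(H)]$; once this is established, the forward implication of that theorem (applied to the triangle-free graph $H$) immediately yields $M(X[I_H]) \geq 0$, which is exactly the conclusion.

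First I would fix a rank-1 decomposition of $X$. Since $X \in \operatorname{R}_1[\mathcal{M}^+_d(G)]$, we may write $X = \sum_k \ketbra{v_k}{v_k}$ where each $\supp(v_k)$ is a clique of $G$. Taking a principal submatrix indexed by $I_H$ is a linear operation (it simply selects the rows and columns in $I_H$), so $X[I_H] = \sum_k \ketbra{v_k[I_H]}{v_k[I_H]}$, where $v_k[I_H]$ denotes the restriction of $v_k$ to the coordinates in $I_H$. Each summand is again rank-1 (or zero, when $\supp(v_k) \cap I_H = \varnothing$), with support exactly $\supp(v_k) \cap I_H$.

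The key step is to identify this restricted support as a clique of $H$. Because $\supp(v_k)$ is a clique of $G$, any two of its vertices lying in $I_H$ are adjacent in $G$; and since $H$ is the \emph{induced} subgraph of $G$ on $I_H$, these vertices are also adjacent in $H$. Hence $\supp(v_k) \cap I_H$ is a clique of $H$, and as $H$ is triangle-free it has at most two elements. Each $\ketbra{v_k[I_H]}{v_k[I_H]}$ is therefore a rank-1 PSD matrix supported on a clique of $H$, so $X[I_H] \in \operatorname{R}_1[\mathcal{M}^+_{|I_H|}(H)]$, and \cref{thm:triangle-free-graphs} finishes the argument.

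I expect the only genuinely delicate point to be the role of the induced hypothesis: it is precisely what guarantees that cliques of $G$ restrict to cliques of $H$. A non-induced subgraph could fail this, since a pair adjacent in $G$ need not be an edge of $H$, and then the restricted support would not be a clique of $H$, breaking the membership $X[I_H] \in \operatorname{R}_1[\mathcal{M}^+_{|I_H|}(H)]$ that the whole reduction relies on. Everything else — the linearity of the submatrix operation and the size bound on cliques coming from triangle-freeness — is routine.
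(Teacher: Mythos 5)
Your proof is correct and follows the same route as the paper: the paper's proof simply asserts that $X[I_H] \in \operatorname{R}_1[\mathcal{M}^+_{d}(H)]$ ("it is easy to see") and then invokes \cref{thm:triangle-free-graphs}, which is exactly your reduction. Your write-up merely fills in the details the paper omits — restricting a rank-1 decomposition and using the induced-subgraph hypothesis to see that cliques of $G$ intersected with $I_H$ remain cliques of $H$ — so the two arguments are essentially identical.
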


\begin{proof}
    It is easy to see that $X[I_H] \in \operatorname{R}_1[\mathcal{M}^+_{|I_H|}(H)]$, and hence the result follows from \cref{thm:triangle-free-graphs}.
\end{proof}

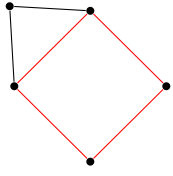
\begin{figure}
    \centering
    \begin{tikzpicture}[
    vertex/.style={
        circle,
        fill=black,
        inner sep=0pt,
        minimum size=3pt
    },
    cycle_edge/.style={
        draw=red 
    }
]

\node[vertex] (v1) at (90:1cm)  {}; 
\node[vertex] (v2) at (180:1cm) {}; 
\node[vertex] (v3) at (270:1cm) {}; 
\node[vertex] (v4) at (0:1cm)   {}; 

\node[vertex] (v5) at (135:1.5cm) {}; 

\draw[cycle_edge] (v1) -- (v2) -- (v3) -- (v4) -- (v1);

\draw (v1) -- (v5) -- (v2); 

\end{tikzpicture}
    \caption{The graph in the figure is not triangle-free, but it has a triangle-free induced subgraph (the subgraph in red), allowing the application of \cref{cor:induced-graph}}.
\end{figure}

Since every non-chordal graph contains a graph which is triangle-free (because it contains a cycle without chords), these tests provide an easy-to-implement family of tests for inclusion into the rank-1 generated cone (and also, novel necessary conditions on separability in the face of quantum states). We want to point out that these tests are not complete, in the sense that some matrices that are not included in $\operatorname{R}_1[\mathcal{M}^+_{n}(G)]$ possibly satisfy all the conditions in \cref{cor:induced-graph}. The complete family of matrix maps that characterize the inclusion into the rank-1 generated cone of sparse PSD matrices has been completely characterized in \cite{paulsen1989Schur}. These tests are based on checking positivity of the matrix after application of certain Schur maps, and this apparatus is intimately connected to the problem of positive-semidefinite completions \cite{grone1984positive}. 
For any matrix $Z \in \mathcal{M}_d$, we define the map $\mathcal{T}_Z (X) = Z \odot X$ where $\odot$ is the entrywise (or Schur) product of matrices $(Z \odot X)_{ij} = Z_{ij} X_{ij}$. It is well known that this Schur product map is a  positive map if and only if $Z \in \mathcal{M}^+_n$. Recall that a matrix is PSD if and only if all its principal submatrices are PSD matrices. We extend this to define the convex cone of matrices that have \emph{few} principal minors PSD. To do this, recall that a clique of a graph is a subgraph that is complete. For any graph $G$, let $\operatorname{cliq}[G]$ denote the set of cliques. For any matrix in $X \in \mathcal{M}_n$, and $I \subseteq [n]$, we define the principal submatrix $X[I] \in \mathcal{M}_{|I|}$ on the indices in $I$. Then, we define the following notion, 

\begin{definition}
    For a graph $G$,  
\begin{equation*}
    \mathcal{M}^*_n(G) := \{Z \in \mathcal{M}_n \, \mid \,  \forall I \in \operatorname{cliq}[G], \, \, Z[I] \geq 0 \}.
\end{equation*}
\end{definition}

Note that for a complete graph, i.e, $G = K_n$, this reduces to the PSD matrix cone, as \emph{all} the principal submatrices are PSD. We show that this is the dual notion to the rank-1 generated cone $\operatorname{R}_1[\mathcal{M}^+_n(G)],$ in the next theorem.

\begin{theorem}
\label{thm:dual-sparse-psd}
Let $X \in \mathcal{M}^+_n(G).$ Then the following is true, 
    $$X \in \operatorname{R}_1[\mathcal{M}^+_n(G)] \iff \mathcal{T}_Z (X)\geq 0$$ for all $Z \in \mathcal{M}^*_n(G)$. 
\end{theorem}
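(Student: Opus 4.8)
The plan is to treat this as a convex duality statement. Write $\mathcal{C} := \operatorname{R}_1[\mathcal{M}^+_d(G)]$. First I would record the generators and the topology of $\mathcal{C}$. Since a rank-$1$ matrix $\ketbra{v}{v}$ has a nonzero off-diagonal entry at $\{i,j\}$ exactly when $i,j \in \operatorname{supp}(v)$, membership $\ketbra{v}{v}\in\mathcal{M}^+_d(G)$ forces $\operatorname{supp}(v)$ to be a clique of $G$; hence $\mathcal{C} = \operatorname{cone}\{\ketbra{v}{v} : \operatorname{supp}(v)\in\operatorname{Clique}[G]\}$. I would then observe that $\mathcal{C}$ is closed: it is the conic hull of the compact set $\{\ketbra{v}{v} : \|v\|=1,\ \operatorname{supp}(v)\in\operatorname{Clique}[G]\}$, all of whose elements have unit trace, so $0\notin\operatorname{conv}$ of this set and the conic hull is closed. (Equivalently, $\mathcal{C}=\sum_{I}\mathcal{F}_I$ is a finite sum of the PSD faces $\mathcal{F}_I=\{A\geq 0:\operatorname{ran}(A)\subseteq\operatorname{span}\{\ket i:i\in I\}\}$ over the maximal cliques $I$, each contained in the pointed closed cone $\mathcal{M}^+_d$, so the sum is closed.)

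The forward implication is a direct computation that does not need duality. If $X=\sum_k\ketbra{v_k}{v_k}$ with each $\operatorname{supp}(v_k)=I_k\in\operatorname{Clique}[G]$, then for any $Z\in\mathcal{M}^*_d(G)$ I would write $Z\odot\ketbra{v_k}{v_k}=D_{v_k}ZD_{v_k}^*$ with $D_{v_k}=\operatorname{diag}(v_k)$. Because $D_{v_k}$ is supported on $I_k$, this is a zero-padded copy of $D_{v_k[I_k]}\,Z[I_k]\,D_{v_k[I_k]}^*$, which is PSD since $Z[I_k]\geq 0$; summing over $k$ gives $\mathcal{T}_Z(X)=Z\odot X\geq 0$.

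For the converse I would identify the dual cone and then exploit a single Schur evaluation. Working in the Hermitian matrices with $\braket{A,B}=\operatorname{Tr}(A^*B)$, for Hermitian $Y$ one has $\braket{Y,\ketbra{v}{v}}=\braket{v|Y|v}$, so $Y\in\mathcal{C}^*$ iff $\braket{v|Y|v}\geq 0$ for every clique-supported $v$, i.e. iff $Y[I]\geq 0$ for all $I\in\operatorname{Clique}[G]$; thus $\mathcal{C}^*=\mathcal{M}^*_d(G)$. The crux is the identity, for the all-ones vector $\ket e$,
\begin{equation*}
\braket{e|\,\bar Y\odot X\,|e}=\sum_{i,j}\overline{Y_{ij}}\,X_{ij}=\braket{Y,X},
\end{equation*}
together with the fact that $\bar Y\in\mathcal{M}^*_d(G)$ whenever $Y\in\mathcal{M}^*_d(G)$ (conjugation preserves positivity of each clique block). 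Assuming $Z\odot X\geq 0$ for all $Z\in\mathcal{M}^*_d(G)$, I apply this with $Z=\bar Y$ and test vector $\ket e$ to get $\braket{Y,X}\geq 0$ for every $Y\in\mathcal{C}^*$; the bipolar theorem (valid since $\mathcal{C}$ is closed) then yields $X\in\mathcal{C}^{**}=\mathcal{C}$.

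The main obstacle I anticipate is precisely this last reduction: a priori the Schur condition $Z\odot X\geq 0$ is a matrix inequality strictly stronger than the scalar pairing $\braket{Z,X}\geq 0$, and the content of the theorem is that evaluating it only on the all-ones vector, while running $Z$ over the conjugates of the dual cone $\mathcal{M}^*_d(G)$, already recovers the full family of separating functionals. Getting the conjugation and the $\ket e$ bookkeeping to line up with $\mathcal{C}^*$ exactly, and confirming closedness so that bipolarity applies, are the points I would handle with the most care.
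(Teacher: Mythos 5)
Your proof is correct, and on the hard direction it is genuinely more than what the paper does: the paper only proves the forward implication (by the same clique-support computation you give, $Z\odot\ketbra{v}{v}=D_v Z D_v^*\geq 0$) and then \emph{cites} \cite[Lemma 2.3]{paulsen1989schur} for the converse, whereas you supply a self-contained argument. Your route --- closedness of $\mathcal{C}=\operatorname{R}_1[\mathcal{M}^+_d(G)]$ (conic hull of a compact trace-one generating set, or a finite sum of PSD faces inside a pointed cone), identification of the dual cone as the clique-positive matrices, and the key identity $\braket{e|\,\bar{Y}\odot X\,|e}=\braket{Y,X}$ that converts the single test-vector evaluation of the Schur product into the trace pairing, followed by the bipolar theorem --- is essentially a reconstruction of the duality argument underlying the cited lemma, so it makes the appendix independent of the reference at the cost of a page of convex analysis. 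Two small points of bookkeeping you should tighten but which do not affect validity: (i) strictly, $\mathcal{C}^*=\mathcal{M}^*_d(G)\cap\mathcal{M}^{\operatorname{sa}}_d$, since $\mathcal{M}^*_d(G)$ as defined in the paper contains non-Hermitian matrices (entries at non-edges are unconstrained); your argument is unaffected because you only ever need the Hermitian elements and their conjugates. (ii) The bipolar step requires $X$ itself to lie in the ambient Hermitian space; this is not an assumption in the statement, but it follows for free from your hypothesis by taking $Z=J$ (the all-ones matrix, which lies in $\mathcal{M}^*_d(G)$), giving $X=J\odot X\geq 0$. With these remarks folded in, your proof stands as a complete replacement for the paper's citation.
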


\begin{proof}
    For the forward implication, let $\ketbra{x}{x} \in \mathcal{M}^+_n(G).$ Then, $\ket{x}$ must have the support only on a clique of the graph $G,$ say $I$, and therefore, $\ketbra{x}{x}$ must be supported in $I.$ Moreover, we claim that $\mathcal{T}_Z (\ketbra{x}{x}) = \ketbra{x}{x} \odot Z \geq 0$ for all $Z \in \mathcal{M}^*_n(G)$. This holds because $Z[I] \geq 0$. This proves the implication. For the reverse implication, we refer the reader to \cite[Lemma 2.3]{paulsen1989Schur}.
\end{proof}

Therefore, it directly follows that for $X \notin \operatorname{R}_1[\mathcal{M}^+_n(G)]$, there exists a $Z \in \mathcal{M}^*_n(G)$ such that  $\mathcal{T}_Z (X) \ngeq 0$. By \cref{cor:sparse-psd-rank-1}, this also detects entanglement in the class of sparse states we consider. This result also allows us to provide another proof of the forward implication in \cref{thm:triangle-free-graphs}.
\begin{corollary}
    For a triangle-free graph, $G$, $X \in\mathcal{M}^+_n(G) \implies M(X) \geq 0$. 
\end{corollary}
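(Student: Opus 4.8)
The plan is to realize the comparison matrix $M(X)$ as the image of $X$ under a \emph{single} Schur map and then read off its positivity from the dual description in \cref{thm:dual-sparse-psd}. Given $X \in \mathcal{M}^+_d(G)$, I would introduce the \emph{phase matrix} $Z$ defined by $Z_{ii} = 1$ for every vertex $i$, by $Z_{ij} = -\overline{X_{ij}}/|X_{ij}|$ whenever $X_{ij} \neq 0$, and by $Z_{ij} = 0$ otherwise. A one-line entrywise check then shows that $\mathcal{T}_Z(X) = Z \odot X$ has diagonal entries $Z_{ii}X_{ii} = X_{ii} = |X_{ii}|$ (using $X \geq 0$) and off-diagonal entries $Z_{ij}X_{ij} = -|X_{ij}|$, so that $\mathcal{T}_Z(X) = M(X)$ identically. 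The problem is thereby reduced to showing that this particular $Z$ is an admissible test matrix, i.e. $Z \in \mathcal{M}^*_d(G)$.

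The next step is exactly that verification, and it is here that triangle-freeness enters decisively. Since $G$ is $\Delta$-free, its only cliques are single vertices and single edges, so membership $Z \in \mathcal{M}^*_d(G)$ requires only that each $1 \times 1$ and each edge-indexed $2 \times 2$ principal submatrix of $Z$ be positive semidefinite. The former are just $[1] \geq 0$. For an edge $\{i,j\}$ the latter is $\begin{bmatrix} 1 & Z_{ij} \\ \overline{Z_{ij}} & 1 \end{bmatrix}$, and by construction $|Z_{ij}| = 1$, so this block has determinant $1 - |Z_{ij}|^2 = 0$ and eigenvalues $0$ and $2$; in particular it is PSD. Hence every clique-submatrix of $Z$ is PSD and $Z \in \mathcal{M}^*_d(G)$.

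Finally, feeding this admissible $Z$ into the characterization of \cref{thm:dual-sparse-psd} yields $\mathcal{T}_Z(X) \geq 0$, which is precisely $M(X) \geq 0$, as claimed. I expect the only delicate point to be the clique reduction in the middle step: it is the absence of triangles --- forcing every clique to have size at most two --- that collapses the global semidefinite test for $Z$ to the elementary $2 \times 2$ blocks, where the pure-phase choice lands automatically on the boundary of the PSD cone. Any analogue for graphs containing triangles would have to contend with genuine $3 \times 3$ (and larger) positivity constraints on $Z$, which the pure-phase construction cannot in general meet; so triangle-freeness is not a cosmetic hypothesis but the structural fact that lets the single Schur map $Z$ carry the entire argument.
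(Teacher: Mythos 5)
Your construction is essentially the paper's own proof: a unimodular-phase Schur multiplier $Z$, the observation that triangle-freeness collapses the cliques of $G$ to vertices and edges (so that membership $Z \in \mathcal{M}^*_d(G)$ reduces to $2\times 2$ blocks with zero determinant), and an appeal to \cref{thm:dual-sparse-psd}. In one respect you are more careful than the paper: its choice $Z_{ij} = e^{-i\arg(X_{ij})}$ actually produces the entrywise absolute-value matrix $|X|$ rather than $M(X)$, whereas your signed choice $Z_{ij} = -\overline{X_{ij}}/|X_{ij}|$ (with $Z_{ii}=1$ and $Z_{ij}=0$ when $X_{ij}=0$) gives $\mathcal{T}_Z(X)=M(X)$ on the nose, and your explicit treatment of vanishing entries removes an ambiguity the paper leaves implicit.

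However, your final inference does not follow from \cref{thm:dual-sparse-psd} under the hypothesis as stated, and this gap is real. That theorem asserts $\mathcal{T}_Z(X)\geq 0$ for \emph{all} $Z\in\mathcal{M}^*_d(G)$ \emph{if and only if} $X\in\operatorname{R}_1[\mathcal{M}^+_d(G)]$; from $X\in\mathcal{M}^+_d(G)$ and $Z\in\mathcal{M}^*_d(G)$ alone you cannot conclude $\mathcal{T}_Z(X)\geq 0$ --- the whole point of the appendix is that incompletable $Z\in\mathcal{M}^*_d(G)$ can witness $\mathcal{T}_Z(X)\not\geq 0$ for sparse PSD $X$ outside the rank-1 generated cone. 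Indeed, the statement read literally is false for non-chordal triangle-free $G$: for the paper's rank-$2$ extremal matrix
\begin{equation*}
Y = \begin{bmatrix}
    1 & 1 & 0 & 1 \\
    1 & 2 & 1 & 0 \\
    0 & 1 & 1 & -1 \\
    1 & 0 & -1 & 2
\end{bmatrix} \in \mathcal{M}^+_4(\mathcal{C}_4),
\end{equation*}
one computes $\bra{e}M(Y)\ket{e} = 6 - 8 = -2 < 0$ for the all-ones vector $e$, so $M(Y)\not\geq 0$ even though $Y$ is PSD with sparsity pattern $\mathcal{C}_4$ (consistent with \cref{thm:triangle-free-graphs}, since $Y\notin\operatorname{R}_1[\mathcal{M}^+_4(\mathcal{C}_4)]$ by extremality at rank $2$). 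The hypothesis should read $X\in\operatorname{R}_1[\mathcal{M}^+_d(G)]$ --- a slip the paper's own proof glosses over in exactly the same way --- and with that correction your argument is complete and recovers, via the dual Schur-map picture, the forward direction of \cref{thm:triangle-free-graphs}.
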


\begin{proof}
    Choose $Z_{ij} = e^{-i \operatorname{arg}(X_{ij})}$. We show that this matrix belongs to $\mathcal{M}^*_n(G)$. Since the graph is triangle-free, all its cliques are of size $2$. Hence, the matrix $Z[i,j] \geq 0$ for every $2 \times 2$ principal minor of the matrix. Applying \cref{thm:dual-sparse-psd}, we get $\mathcal{T}_Z(X) = M(X) \geq 0$.
\end{proof}

Let us proceed with defining the notion of PSD-completable matrices. A matrix $X \in \mathcal{M}^*_n(G)$ is called to be PSD-\emph{completable} if there exists a matrix $\Tilde{X} \in \mathcal{M}^+_n$ such that $\Tilde{X}[I] = X[I]$ for all $I \in \operatorname{cliq}[G].$

\begin{theorem}
    $\mathcal{T}_Z (X)\geq 0$ for all $X \in \mathcal{M}^+_n(G)$ if $Z$ is PSD-completable. Therefore $X \notin \operatorname{R}_1[\mathcal{M}^+_n(G)]$ can be detected only by non-completable matrices. 
\end{theorem}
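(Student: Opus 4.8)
The plan is to exploit the completion $\tilde Z \ge 0$ guaranteed by completability together with the Schur product theorem, reducing the statement to the positivity of a Hadamard product of two genuine PSD matrices. First I would fix $X \in \mathcal{M}^+_d(G)$ and a completable $Z \in \mathcal{M}^*_d(G)$, and let $\tilde Z \ge 0$ be a PSD completion, so that $\tilde Z[I] = Z[I]$ for every clique $I \in \operatorname{Clique}[G]$.

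The key observation is that the sparsity pattern of $X$ confines its support to cliques of size at most two. Since $G(X) \subseteq G$, the entry $X_{ij}$ can be nonzero only when $i = j$ (a singleton clique) or when $\{i,j\}$ is an edge of $G$ (a $2$-clique). On every such clique the completion agrees with $Z$, so $\tilde Z_{ij} = Z_{ij}$ whenever $X_{ij} \neq 0$. Consequently the two Schur products coincide entrywise,
$$Z \odot X = \tilde Z \odot X,$$
because they can only differ at positions where $X$ vanishes, and the product is zero there regardless of the value of $Z$.

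Finally I would invoke the Schur product theorem: since both $\tilde Z$ and $X$ are positive semidefinite, their Hadamard product $\tilde Z \odot X$ is positive semidefinite. Combining this with the identity above yields $\mathcal{T}_Z(X) = Z \odot X = \tilde Z \odot X \ge 0$, which is the asserted positivity. The ``therefore'' clause then follows at once from \cref{thm:dual-sparse-psd}: a matrix $X \notin \operatorname{R}_1[\mathcal{M}^+_d(G)]$ is certified precisely by some $Z \in \mathcal{M}^*_d(G)$ with $\mathcal{T}_Z(X) < 0$, and we have just shown that no completable $Z$ can return a negative value, so any witnessing $Z$ must be incompletable.

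I do not anticipate a serious obstacle: the argument is essentially the remark that a PSD completion is indistinguishable from $Z$ when tested against any matrix supported on cliques. The only point requiring a moment of care is verifying that the diagonal and edge positions---which together exhaust the support of a $G$-sparse matrix---are genuinely covered by the clique constraints defining $\mathcal{M}^*_d(G)$; this holds simply because singletons and edges are themselves cliques, so no higher-clique information about $Z$ is ever needed.
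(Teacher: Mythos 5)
Your proof is correct and follows the same route as the paper: replace $Z$ by its PSD completion $\tilde Z$, note that $Z \odot X = \tilde Z \odot X$ since $X$ is supported on the diagonal and the edges of $G$ (which are cliques, where $\tilde Z$ agrees with $Z$), and conclude by the Schur product theorem. The paper's proof is just a terser version of this, leaving the agreement-on-support step implicit, so your write-up adds only useful detail.
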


\begin{proof}
    Let $Z'$ be the completion of the matrix $Z$. Then $\mathcal{T}_Z(X) = Z \odot X = Z' \odot X \geq 0$. 
\end{proof}
Note that these witnesses are analogous to decomposable ones, which do not detect PPT entangled states. It is important to note that in this dual picture, we have the analogous theorem for chordal graphs. (See also \cite[Theorem 2.4]{paulsen1989Schur}).

\begin{theorem}
The following are equivalent for a graph $G$, 
\begin{enumerate}
    \item All $Z \in \mathcal{M}_n^*(G)$ are completable. 
    \item The graph is chordal. 
\end{enumerate}
\end{theorem}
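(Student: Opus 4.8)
The plan is to prove the two implications separately, handling $(2)\Rightarrow(1)$ by a constructive induction and $(1)\Rightarrow(2)$ by contraposition through an explicit non-completable partial matrix.

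For $(2)\Rightarrow(1)$ (chordal implies every $Z\in\mathcal{M}^*_d(G)$ is completable), I would induct on the number of vertices, using the fact that a chordal graph always has a simplicial vertex $v$, i.e. one whose neighborhood $N(v)$ is a clique. Deleting $v$ leaves the induced subgraph $G-v$, which is again chordal, so by the inductive hypothesis the partial matrix restricted to $V\setminus\{v\}$ admits a PSD completion $\hat Z$. The key step is to re-adjoin the row and column of $v$: I pass to a Gram realization, writing $\hat Z$ as the Gram matrix of vectors $\{g_w\}_{w\neq v}$, and then seek a vector $g_v$ with $\langle g_v,g_w\rangle = Z_{vw}$ for $w\in N(v)$ and $\|g_v\|^2=Z_{vv}$. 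Existence of such a $g_v$ (after possibly enlarging the ambient space) is equivalent to positivity of the single clique block $Z[\{v\}\cup N(v)]$, which holds by hypothesis; the remaining inner products $\langle g_v,g_w\rangle$ for $w\notin N(v)$ then furnish the missing entries and complete $Z$ to a PSD matrix. It is essential here that $N(v)$ is a clique, so that all prescribed inner products come from one PSD clique block rather than several possibly inconsistent ones.

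For $(1)\Rightarrow(2)$ I prove the contrapositive: if $G$ is not chordal I exhibit some $Z\in\mathcal{M}^*_d(G)$ that is not completable. A non-chordal graph contains a chordless cycle $C$ on a vertex set $S$ with $|S|=k\geq 4$. On $S$ I put unit diagonal and assign each cycle edge a phase of modulus one, chosen so that their product around the cycle is not $1$ (for instance all equal to $1$ except a single $-1$). Every $2\times 2$ edge block then has determinant $0$ and is positive semidefinite, but any genuine PSD matrix with these rank-one blocks forces consecutive Gram vectors $g_i$ to be unimodular scalar multiples of one another around the cycle; propagating these relations and closing the loop contradicts the mismatched final phase, so the cycle data admits no PSD completion.

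Finally I extend this obstruction to all of $G$ without violating the clique constraints. I set $Z_{vv}=1$ for every vertex, keep the cycle values on the edges inside $S$, and set every other specified off-diagonal entry to $0$. Because $C$ is an induced, triangle-free cycle, any clique of $G$ meets $S$ in at most two vertices, and in exactly two only along a cycle edge; hence each clique submatrix $Z[I]$ is the identity perturbed by at most one positive-semidefinite, rank-deficient $2\times 2$ block, so indeed $Z\in\mathcal{M}^*_d(G)$. A completion $\tilde Z\geq 0$ of $Z$ would restrict to a PSD completion of the cycle data on $S$ (the cliques inside $S$ being exactly the vertices and the cycle edges), which we have just ruled out. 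I expect this extension step to be the main obstacle: it is precisely where one must exploit that the cycle is \emph{induced} to control how cliques of $G$ intersect $S$, thereby keeping all clique blocks positive semidefinite while preserving the non-completability witnessed on $S$. This recovers the Grone–Johnson–S\'a–Wolkowicz completion theorem; see \cite{grone1984positive} and \cite[Theorem 2.4]{paulsen1989schur}.
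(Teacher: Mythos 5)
Your proof is correct, but note that the paper itself does not prove this theorem at all: it states it as a known result and defers to the literature (\cite[Theorem~2.4]{paulsen1989schur}, with the original completion theorem due to Grone--Johnson--S\'a--Wolkowicz \cite{grone1984positive}). What you have done is reconstruct, essentially from scratch, the classical argument behind that citation, and both halves hold up. For $(2)\Rightarrow(1)$, the simplicial-vertex induction works because the bordered matrix whose positivity governs the existence of the new Gram vector $g_v$ is exactly the clique block $Z[\{v\}\cup N(v)]$ (here $\hat Z[N(v)]=Z[N(v)]$ since $N(v)$ is a clique, so all its entries are specified); this is the standard fact that $\begin{pmatrix} c & b^* \\ b & A\end{pmatrix}\geq 0$ iff $b\in\operatorname{ran}(A)$ and $c\geq b^*A^+b$, which is precisely the condition for realizing the prescribed inner products and norm in a possibly enlarged space. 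For $(1)\Rightarrow(2)$, your phase obstruction on a chordless cycle (unimodular edge entries with product $\neq 1$, forcing a contradiction via equality in Cauchy--Schwarz and propagation of proportionality around the loop) is the standard witness, and your zero-extension step correctly exploits that the cycle is induced: any clique of $G$ meets the cycle in at most two vertices, and then only along a cycle edge, so every clique block is a permutation of $(\text{rank-one } 2\times 2 \text{ block})\oplus \mathrm{I}$ and hence PSD. The net effect is that your write-up supplies a self-contained proof where the paper offers only a pointer; the paper's route buys brevity and an operator-theoretic framing (Schur multipliers, as in Paulsen--Power--Smith), while yours buys transparency and makes visible exactly where chordality enters (the simplicial neighborhood being a single clique) and where non-chordality bites (the induced cycle controlling clique intersections).
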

Let us summarise our results in the following slogan. \emph{If the graph of the matrix $A$ is non-chordal, then the gap between the sparse PSD cone and its rank-1 generated version detects PPT entanglement}. This result provides a huge diversity for the construction of such states, each of which has $\approx O(d^2)$ free parameters. We show that this is very useful for constructing families of PPT entangled LDOI states.  
Select a \emph{non-chordal} graph $G$ on $d$ vertices. Take $Y \notin \operatorname{R}_1[\mathcal{M}^+_n(G)]$, and select $Z \geq 0$ such that diag(Z) = diag(Y) and $G(Z) = G$. Then, select a non-negative $X$ (ensuring diag(X) = diag(Y)) such that $G(X) = G$ and  $X_{ij} X_{ji} \geq |Z_{ij}|^2$.

Then, the following state is PPT entangled.
\begin{align*}
    \rho_{X,Y,Z}  = \sum_{i,j}X_{ij}\ketbra{ij}{ij} &+ \sum_{i \neq j}Y_{ij}\ketbra{ii}{jj} \\ &+\sum_{i \neq j}Z_{ij}\ketbra{ij}{ji}.
\end{align*}

    \begin{proof}[Proof of entanglement]
    The PPT property can be readily verified from \cref{thm:ppt-ldoi}. To check that the state is entangled, note that these states lie on the face $\mathcal{F}_G$, and for $K = \sum^d_{i=1} \ket{i}\bra{ii}$, $Z_K(\rho) \notin \operatorname{R_1}[\mathcal{M}_d(G)].$
    \end{proof}

    \begin{example}
        We select the smallest non-chordal graph, the cycle graph $\mathcal{C}_4$, and we use the extremal matrix of $C$ that has rank $2$ \cite[Example 1]{agler1988positive}:
\[
Y = \begin{bmatrix}
    1 & 1 & 0 & 1 \\
    1 & 2 & 1 & 0 \\
    0 & 1 & 1 & -1 \\
    1 & 0 & -1 & 2 \\
\end{bmatrix}.
\]

Then, by using the following free parameters, \[
X = \begin{bmatrix}
    1 & X_{12} & 0 & X_{14} \\
    X_{21} & 2 & X_{23} & 0 \\
    0 & X_{23} & 1 & X_{34} \\
    X_{41} & 0 & X_{43} & 2 \\
\end{bmatrix}, \] \[ Z = \begin{bmatrix}
    1 & Z_{12} & 0 & Z_{14} \\
    Z_{12}^* & 2 & Z_{23} & 0 \\
    0 & Z_{23}^* & 1 & Z_{34} \\
    Z_{14}^* & 0 & Z_{34}^* & 2 \\
\end{bmatrix}
\]
such that $Z \geq 0$ and $X_{ij}X_{ji} \geq |Z_{ij}|^2$, results in a PPT entangled state. Note that due to extremality, the constructed matrix forms a family of PPT entangled edge-states. 

\subsubsection{Restricted Rank-1 States}
\label{sec:corr-matrices}

In this section, we provide another face of PSD bipartite matrices for which our result in \cref{thm:rank-1-generated-ent-transpose} detects PPT entangled states.  Let $V = \operatorname{span}^n_{i=1}\{\ket{ii}\}$. Then, we can define the following face of quantum states, 

$$\mathcal{F}_{\ket{\phi}} := \{\rho \in (\mathcal{M}_d \otimes \mathcal{M}_d)^+ \mid \restr{\rho}{V} \propto \ketbra{\phi}{\phi}\}.$$ 

This is the face in which all the density matrices restricted to $V$ are proportional to the rank-1 matrix, $\ketbra{\phi}{\phi}.$ It is easy to see why these matrices form a face of the convex cone. Particularly, let's assume $\rho \in \mathcal{F}_{\ket{\phi}}$ has a conic decomposition into sum of two PSD matrices, i.e $\rho =\rho_1 + \rho_2 \mid \rho_1, \rho_2 \in (\mathcal{M}_{d} \otimes \mathcal{M}_d)^+.$ Then, it implies that, 

$$\restr{\rho}{V} = \restr{\rho_1}{V} + \restr{\rho_2}{V}.$$ Since $\restr{\rho}{V} \propto \ketbra{\phi}{\phi}$ that is extremal, we have, $\restr{\rho_1}{V}$ and $ \restr{\rho_2}{V}$ also are $\propto \ketbra{\phi}{\phi}.$ This shows that this is a face of the bipartite states.

We now define the key convex cone for this section, that is, the cone of positive semidefinite matrices that have diagonal elements with some fixed proportions. Note that the cone of positive-semidefinite matrices with all diagonal elements equal has been well-studied in the literature, under the name of complex \emph{correlation matrices} \cite{Christensen1979corr}. In the case of real matrices, this cone is sometimes referred to as the \emph{elliptope}. Let us define the linear space of matrices for which the diagonal of the matrix is proportional to some fixed non-negative vector. 
\begin{definition}
Let $x \in \mathbb{R}^n_+$ and, 
    \begin{equation}
        \mathcal{M}_n[x] := \{X | \, \exists \lambda, \, \,  \forall i, \,  \,  X_{ii} = \lambda x_i\}.
    \end{equation}   
\end{definition}

Let $e$ be the vector with all elements $1$. 

\begin{definition}[Correlation matrices]
    $\mathcal{M}^+_n[e]= \mathcal{M}_n^+ \, \cap \,  \mathcal{M}_n[e]$ is the cone of PSD matrices with the diagonal elements equal. 
\end{definition}

For any vector $q \in \mathbb{C}^n$, we can embed the vector into a diagonal matrix, $D_q := \operatorname{diag}(q_1, q_2 \ldots q_n)$. We proceed with the following definition. 

\begin{definition}[Scaled correlation matrices]
    For any vector $x \geq 0$, the scaled correlation matrices,  
    $$\mathcal{M}^+_n[x] := D_{\sqrt{x}} \, \mathcal{M}^+_n[e] \, D_{\sqrt{x}}$$ is the convex cone of positive semidefinite matrices with the diagonal $\propto x$. 
\end{definition}

Moreover, the ranks of the extremal matrices in the cone $\mathcal{M}^+_n$ have been completely characterized. It is known that any extremal matrix has a rank-$r$ if and only if $r^2 \leq d$. \cite{Grone1990corr} \cite{Li1994corr}. Hence, we want to point out that the maximum ranks that are achieved by such extremal matrices are $\lfloor \sqrt{n} \rfloor$. Also, it follows that, 

\begin{theorem}[\cite{Li1994corr}]
    For all $n \geq 4$, we have $$\operatorname{R}_1[\mathcal{M}^+_n[e]] \subsetneq \mathcal{M}^+_n[e].$$ 
\end{theorem}

This means that the cone of such matrices is not rank-1 generated for any $n \geq 4$. The same holds for the scaled correlation matrices $\mathcal{M}^+_n[x]$. We show that these sets can also be connected to detecting PPT entanglement in the face $\mathcal{F}_{\ket{\phi}}.$
Let us define $K := \sum_{i=1}^n\ket{i}\bra{ii}.$ Let $\rho$ be a bipartite PPT state in the face $\mathcal{F}_{\ket{\phi}}$. Then, we can see that, 
\begin{equation}
\label{eq:diagonal-fixed}
        Z^\Gamma_K(\rho)_{ii} = |\phi_i|^2 .
\end{equation} 
This implies that $Z^\Gamma_K(\rho) \in \mathcal{M}^+_n[\phi \odot \phi].$ Then, the following proposition follows from the application of \cref{thm:rank-1-generated-ent-transpose}.
\begin{proposition}
\label{thm:correlation-rank-1}
     If the state $\rho$ is separable in $\mathcal{F}_{\ket{\phi}}$, then the matrix, $$Z^\Gamma_K(\rho) \in \operatorname{R}_1[\mathcal{M}^+_n[\phi \odot \phi]]$$
\end{proposition}

Therefore, we find that if such a state in $\mathcal{F}_{\phi}$ is separable, the necessary condition on the matrix $Z^\Gamma_K(\rho)$ is that it has to be a rank-1 generated element of the $\mathcal{M}^+_n[\phi \odot \phi]]$. This also allows us to construct new PPT entangled states. We first provide an example of a witness to detect the violation of the rank-1 generated property in the first non-trivial case of $n=4$, which we borrow from the \cite{jarre2020set}, which outlines a general strategy to witness the separation between correlation matrices and their rank-1 generated version. To this end, note that for complex correlation matrices, i.e, $\mathcal{M}^+_n[e],$ we also have the following characterization, 

\begin{equation*}
    \operatorname{R}_1[\mathcal{M}_n^+[e]]  = \operatorname{cone}\{\ketbra{z}{z} \, \mid \, z \in \mathbb{T}^n\, \}.
\end{equation*} Consider the following $4 \times 4$ matrix, 

\[
W  =\begin{bmatrix}
0 & -i & i & 1 \\
i & 0 & -i & 1 \\
-i & i & 0 & 1 \\
1 & 1 & 1 & 0 & 
\end{bmatrix}.
\] Then, it is shown \cite{jarre2020set} that $\braket{z|W|z} \leq 6$ for all $z \in \mathbb{T}^4$ and hence we have $\operatorname{Tr}(W Z) \leq 6 \operatorname{Tr(Z)}$ for all $Z \in \operatorname{R}_1[\mathcal{M}_4^+[e]]$. Let us consider the following parametrized matrix, 

\[
H(x)  =\begin{bmatrix}
1 & -ix & ix & x \\
ix & 1 & -ix & x \\
-ix & ix & 1 & x\\
x & x & x & 1 & 
\end{bmatrix}.
\]

It is easy to verify by direct computation of its eigenvalues that $H(x) \geq 0 \iff |x| \leq 1/\sqrt{3}$ and also $\operatorname{Tr}(H(x)W) = 12x \leq 6 \iff x \leq 1/2$. Hence, for the parameter range $1/2 < x \leq 1/\sqrt{3}$, the matrix $H(x)$ is not a rank-1 generated correlation matrix.  Note that, for $1/2 < x < 1/\sqrt{3},$ the matrix also has full rank $4$, and is not extremal. 

\begin{theorem}
\label{thm:extremal-correlation}
    The matrix $H(x)$ is rank-$2$ and extremal in $\mathcal{M}^+_4[e]$ for $x = \frac{1}{\sqrt{3}}.$ 
\end{theorem}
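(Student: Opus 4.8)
The plan is to handle the two assertions separately: first the rank, then extremality in the correlation cone $\mathcal{M}^+_4[e]$.

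For the \emph{rank}, I would reuse the spectral information already recorded in the excerpt: $H(x)$ has eigenvalues $1+\sqrt3\,x$ and $1-\sqrt3\,x$, each of multiplicity $2$. A clean way to produce these is to diagonalize the circulant upper-left $3\times3$ block: with $\omega=e^{2\pi i/3}$, the vectors $(1,\omega,\omega^2,0)$ and $(1,\omega^2,\omega,0)$ remain eigenvectors of the full matrix because $1+\omega+\omega^2=0$ decouples them from the fourth coordinate (eigenvalues $1+\sqrt3 x$ and $1-\sqrt3 x$), while the remaining invariant plane $\operatorname{span}\{(1,1,1,0),(0,0,0,1)\}$ carries the $2\times2$ matrix $\begin{bmatrix}1 & \sqrt3 x\\ \sqrt3 x & 1\end{bmatrix}$, whose eigenvalues are again $1\pm\sqrt3 x$. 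Setting $x=1/\sqrt3$ gives the spectrum $2,2,0,0$, so $\operatorname{rk}H(1/\sqrt3)=2$. Since $2=\lfloor\sqrt4\rfloor$, this is the maximal rank an extremal correlation matrix can attain for $d=4$ by the cited rank result, so we are exactly in the tight case $r^2=d$.

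For \emph{extremality}, I would invoke the standard perturbation criterion for the correlation cone. Writing $C:=H(1/\sqrt3)$, the ray $\mathbb{R}_+C$ is extremal in $\mathcal{M}^+_4[e]$ iff the only Hermitian $H$ with $\operatorname{ran}(H)\subseteq\operatorname{ran}(C)$ and constant diagonal is a real multiple of $C$; subtracting that multiple of $C$, this is equivalent to demanding that the only Hermitian $H$ with $\operatorname{ran}(H)\subseteq\operatorname{ran}(C)$ and \emph{zero} diagonal be $H=0$. The range condition guarantees $C\pm tH\geq 0$ for small $t$ (as $C$ is positive definite on its range), and the diagonal condition keeps $C\pm tH$ inside $\mathcal{M}_4[e]$; so any nonzero such $H$ would split $C$ nontrivially, and conversely.

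Next I would linearize. Writing $\operatorname{ran}(C)=\operatorname{span}\{a,b\}$ with $a=(1,\omega,\omega^2,0)$ and $b=(1,1,1,\sqrt3)$ the eigenvectors for eigenvalue $2$, and collecting them as the columns of $V\in\mathcal{M}_{4\times2}$, every Hermitian $H$ with $\operatorname{ran}(H)\subseteq\operatorname{ran}(C)$ has the form $H=VBV^*$ for Hermitian $B\in\mathcal{M}_2^{\mathrm{sa}}$, and its $i$-th diagonal entry is $v_i^* B v_i=\langle P_i,B\rangle$, where $v_i^*$ is the $i$-th row of $V$ and $P_i:=v_iv_i^*$. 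Thus zero diagonal means $B\perp\operatorname{span}\{P_1,\dots,P_4\}$, and extremality becomes the statement that $\{P_1,P_2,P_3,P_4\}$ spans $\mathcal{M}_2^{\mathrm{sa}}$ — equivalently, that these four Hermitian $2\times2$ matrices are linearly independent, which is the tight count since $\dim_{\mathbb{R}}\mathcal{M}_2^{\mathrm{sa}}=4=d$. (Replacing $V$ by $VM$ only congruences the $P_i$ by $M^*(\cdot)M$, so any basis of the range may be used.) Carrying this out, the rows of $V=[a\,|\,b]$ give $P_1=\begin{bmatrix}1&1\\1&1\end{bmatrix}$, $P_2=\begin{bmatrix}1&\omega^2\\\omega&1\end{bmatrix}$, $P_3=\begin{bmatrix}1&\omega\\\omega^2&1\end{bmatrix}$, $P_4=\begin{bmatrix}0&0\\0&3\end{bmatrix}$, which in the real coordinates $(H_{11},H_{22},\operatorname{Re}H_{12},\operatorname{Im}H_{12})$ read $(1,1,1,0),(1,1,-\tfrac12,-\tfrac{\sqrt3}{2}),(1,1,-\tfrac12,\tfrac{\sqrt3}{2}),(0,3,0,0)$; a direct $4\times4$ determinant is nonzero, so the $P_i$ are independent and $C$ is extremal. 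I expect the only genuine obstacle to be organizational rather than computational: stating the extremality criterion correctly in the complex-Hermitian setting, where the dimension count $r^2=d$ is exactly tight so nothing can be slack, and producing the eigenvectors — which the circulant-plus-coupling block structure renders routine.
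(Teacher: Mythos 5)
Your proof is correct and takes essentially the same route as the paper: rank $2$ from the spectrum $\{2,2,0,0\}$, and extremality reduced to checking that the outer products $v_iv_i^*$ of the (conjugated) rows of a basis of $\operatorname{ran}(C)$ span $\mathcal{M}^{\operatorname{sa}}_2$ --- which is exactly the criterion of \cite[Theorem 1]{Li1994corr} that the paper invokes, except that you re-derive it from scratch via the perturbation/face argument rather than citing it. A minor bonus of your version: you use the genuine eigenvalue-$2$ eigenvectors $(1,\omega,\omega^2,0)$ and $(1,1,1,\sqrt{3})$ as the range basis (the paper's displayed $X^*$ contains $(1,1,1,-\sqrt{3})$, which is in fact a kernel vector --- a sign slip that happens not to affect the span check, since the relevant outer products agree up to sign and coordinate relabeling), and your nonzero $4\times 4$ determinant makes explicit what the paper leaves as ``easy to verify.''
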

\begin{proof}
    The rank-2 property can be immediately verified by the computation of eigenvalues. For extremality, we use the results of \cite[Theorem 1]{Li1994corr}. Consider the matrix $X$ whose columns are the eigenvectors of the matrix. 
    We need to show that the following holds. 
    \begin{equation}
    \label{eqn:span}
        \operatorname{span}\{x_i x_i^* \mid x_i \in X^*\} = \mathcal{M}^{\text{sa}}_2.
    \end{equation}
    Here the matrix, $X^* = \begin{bmatrix}
         1 & 1 & 1 & -\sqrt{3} \\ 
         1 & \omega^2 & \omega & 0
    \end{bmatrix}$ where $\omega := e^{i 2 \pi/3}$. 
    Then it is easy to verify that the \cref{eqn:span} is satisfied for $x_1 = (1,1), x_2 = (1, \omega^2), x_3 = (1, \omega), x_4 = (-\sqrt{3},0)$. 
\end{proof}

Finally, we provide examples of PPT entangled states using extremal matrices with rank $\geq 2.$
We consider the following parametric family of quantum states ($\forall i, A_{ii} = 1$), \begin{equation}\label{eq:corr-states}\ketbra{\omega}{\omega} + \sum^4_{i \neq j} A_{ij} \ketbra{ij}{ij} + \sum^4_{i \neq j} H(x)_{ij} \ketbra{ij}{ji}\end{equation} where $\omega = \sum_i \ket{ii}$ is the maximally entangled Bell state (un-normalized). These states belong to the face $\mathcal{F}_{\ket{e}}.$ The PPT property can be verified readily from \cref{thm:ppt-ldoi} by having $\forall i,j, A_{ij} A_{ji} \geq 1$. Moreover, it is easy to show that by choosing $K = \sum^4_{i=1}\ketbra{i}{ii}$ we have $Z^\Gamma_K(\rho) = H(x).$ Since this matrix is not a rank-1 generated correlation matrix for $1/2 < x \leq 1/\sqrt{3}$, the state is PPT entangled by \cref{thm:correlation-rank-1}. Moreover, for $x = \frac{1}{\sqrt{3}}$ this is also a PPT entangled edge-state by \cref{thm:extremal-correlation}. 

Note that the states considered in the equation \cref{eq:corr-states} are exactly the LDOI states with the triple $(A, \mathbb{J}_4, H(x)).$ We have the following theorem from \cite{singh2021diagonal}, which shows our necessary condition is also sufficient in the case of $A = \mathbb{J}.$  

\begin{theorem}[\cite{singh2021diagonal}]
    The LDOI state in $\mathcal{M}_d \otimes \mathcal{M}_d$ with the triple $(\mathbb{J}_d, \mathbb{J}_d, X)$ is separable if and only if $X \in \operatorname{R}_1[\mathcal{M}_d^+[e]].$
\end{theorem}

We also show that two commonly used criteria for detecting entangled states fail to do so. 
\underline{CCNR criterion}: From \cref{thm:ccnr-ldoi}, it follows that the states we consider in \cref{eq:corr-states} satisfy the CCNR criterion if and only if 
$$\vert \vert A \vert \vert_1 - \vert \vert A \vert \vert_{\operatorname{Tr}} \geq 12.$$

For the case $A = \mathbb{J}_4$ we have $\vert \vert A \vert \vert_1 = 16,$ and $\vert \vert A \vert \vert_1 = 4,$ so the state satisfies the CCNR criterion.  Moreover, by choosing $A = \mathbb{I}_4 + t (\mathbb{J}_4 -\mathbb{I}_4),$ we have that the state satisfies the realignment criterion if $t \geq 1,$ which is also ensured by the PPT criterion. This shows that the rank-based criterion is more powerful than CCNR (and also PPT) for some classes of states.
\end{example} 

\subsubsection{Bose-symmetric states}\label{sec:bose-symmetric}
In this section, we analyze another class of quantum states for which our main result,  \cref{thm:rank-1-generated-ent-transpose}, provides a method of detecting entanglement. We consider the states for which the range lies in the symmetric subspace, which we denote by $\mathbb{C}^d \vee \mathbb{C}^d, $ $\vee$ being the symmetric product. Clearly, this forms a face of the bipartite PSD cone, due to the restriction of its range (see \cref{thm:psd-faces}). 
Let $F$ denote the flip operator, $$F = \sum^d_{i,j=1} \ketbra{ij}{ji}$$ and $\Pi := \frac{\mathbb{I}+ F}{2}$ denote the projection into the symmetric subspace.  Then, for a bipartite PSD matrix, the following are equivalent, 

\begin{itemize}
    \item $\operatorname{ran}(\rho) \subseteq \mathbb{C}^d \vee \mathbb{C}^d,$
    \item  $\Pi \rho \Pi = \rho,$
    \item  $\rho F = F \rho =  \rho.$
\end{itemize}

Interestingly, for all bosonic states, all the usual entanglement criteria, like PPT, realignment, and the covariance criterion, are equivalent \cite{toth2009entanglement}, hence, any PPT entangled bosonic state cannot be detected by such criteria. Let us call the bosonic PSD matrices as $\mathcal{F}_{\mathsf{\vee}}$. Let us consider the convex cone of doubly non-negative matrices, 
$$\mathsf{DNN}_n :=  \mathcal{M}^+_n \cap \mathcal{M}_n (\mathbb{R}_+).$$

Then, we have the following proposition.
\begin{proposition}
\label{prop:bose-symm-dnn}
    Assume $\rho \in \mathcal{F}_{\mathsf{\vee}}$ is PPT. Let $\{\ket{i}\}^n_{i=1}$ denote the orthonormal set of vectors of size $n$. Let $K = \sum^n_{i=1}\ket{i}\bra{i, \overbar{i}}.$ Then, the following is true,
    $$Z^\Gamma_K(\rho) \in \mathsf{DNN}_n.$$
\end{proposition}
\begin{proof}
Let $\rho$ be a PSD state in $\mathcal{F}_{\mathsf{\vee}}.$ Moreover, we have,  
\begin{align*}
Z^\Gamma_K(\rho) 
&= \sum_{i=1}^n \sum_{j=1}^n 
    \braket{i, \overbar{i} \vert \rho^\Gamma \vert j, \overbar{j}} \, \ket{i}\bra{j} \\
&= \sum_{i=1}^n \sum_{j=1}^n 
    \braket{i, j \vert \rho \vert j, i} \, \ket{i}\bra{j} \\
&= \sum_{i=1}^n \sum_{j=1}^n 
    \braket{i, j \vert \rho F \vert i, j} \, \ket{i}\bra{j}\\
&= \sum_{i=1}^n \sum_{j=1}^n 
    \braket{i, j \vert \rho \vert i, j} \, \ket{i}\bra{j}
\end{align*}
Since $\rho$ is PSD, we have $\braket{i, j \vert \rho \vert i, j} \geq 0 \implies Z^\Gamma_K(\rho) \in \mathcal{M}_d(\mathbb{R}_+).$ Moreover, this matrix is PSD by \cref{thm:rank-1-generated-ent-transpose}.
\end{proof}

Using our result in \cref{thm:rank-1-generated-ent}, we can provide another necessary criterion for the separability of bosonic states. 
\begin{corollary}
\label{thm:dnn-rank-1}
    Assume $\rho$ is a separable state in $\mathcal{F}_\vee$. Then, by choosing $K := \sum^n_{i=1}\ketbra{i}{i, \overbar{i}}$, we have $$Z^\Gamma_K(\rho) \in \operatorname{R}_1[\mathsf{DNN}_n].$$
\end{corollary}

\begin{table*}[t]
\centering
\begin{tabular}{r|l|c}

\rowcolor{pink}
Face of PSD &  PPT ${\implies}$ &  SEP ${\implies}$\\ 
\midrule

Sparse States (\ref{sec:psd-graphs}) & Sparse PSD matrices (\ref{thm:sparse-PSD}) & Rank-1 gen. $G$-sparse PSD matrices (\ref{cor:sparse-psd-rank-1})\\ 

\midrule

Restricted Rank-1 States (\ref{sec:corr-matrices}) & Scaled correlation matrices (\ref{eq:diagonal-fixed}) & Rank-1 gen. scaled correlation matrices (\ref{thm:correlation-rank-1})\\ 

\midrule

Bosonic states (\ref{sec:bose-symmetric}) & DNN matrices (\ref{prop:bose-symm-dnn}) & Rank-1 gen. DNN matrices (\ref{thm:dnn-rank-1})\\ 

\end{tabular}
\caption{Summary of the considered face of the state and the mappings of the PPT states on the face. The separable states are mapped to the corresponding rank-1 generated convex cone.}
\label{tab:mytable}
\end{table*} 
The ranks of the extremal rays of the doubly non-negative cone have been scrutinized in \cite{cristaextremednn}.  
\begin{theorem}[Ranks of extremal DNN matrices]
\label{thm:ranks-of-dnn}
    Let $A \neq 0 \in \operatorname{ext} \mathsf{DNN}_n.$  Then, we have, 
    \begin{itemize}
        \item If $n = 2$, $\operatorname{rk}(A) \leq 1.$
        \item If $n > 2$ is even, $\operatorname{rk}(A) \leq n-3.$
        \item If $n > 2$ is odd, $\operatorname{rk}(A) \leq n-2.$
    \end{itemize}
    Moreover, we have elements $A \in \operatorname{ext} \mathsf{DNN}_n$ for all ranks satisfying the conditions above. 
\end{theorem}

Moreover, the rank-1 generated version of this cone is the well-known cone of completely positive matrices.
$$\operatorname{R}_1[\mathsf{DNN}_n] = \{\ketbra{v}{v} \mid v \in \mathbb{R}_+^d\} := \mathsf{CP}_n.$$  \cref{thm:ranks-of-dnn} implies that $\operatorname{R}_1[\mathsf{DNN}_n] = \mathsf{DNN}_n$ if and only if $n \leq 4.$ For more details on these cones and examples of doubly non-negative matrices that are not rank-1 generated (i.e, not CP), we refer you to the book \cite{shaked2021copositive}. The extremal DNN matrices also allow us to construct PPT entangled edge states (that are also bosonic). In general, it is computationally NP-hard to decide whether a $\mathsf{DNN}_n$ matrix belongs to $\mathsf{CP}_n$. Lots of results are known about these matrices in special cases, like matrices having sparsity patterns corresponding to triangle-free graphs \cite{shaked2021copositive}.  

Let us also briefly introduce the witnesses for detecting states that are not rank-1 generated. These are well-known under the name of copositive matrices, denoted as $\mathsf{COP}_n,$ that is, exactly the set of all matrices satisfying, 

$$\forall v \in \mathbb{R}^n_+ \quad \braket{v \vert X \vert v} \geq 0.$$

It is clear from the previous definition that if a matrix is the sum of a non-negative matrix and a PSD matrix, then it is copositive. Such matrices are called $\mathsf{SPN}_n$ matrices. This convex cone is the conic-dual to $\mathsf{DNN}_n$ matrices \cite{shaked2021copositive}, and hence if $X \in \mathsf{SPN}_n$ then $$\forall Y \in \mathsf{DNN}_n \quad \operatorname{Tr}(XY) \geq 0.$$  

To detect non-rank-1 generated matrices, we can construct a witness that is copositive but not SPN. The paradigmatic example of such a matrix is the \emph{Horn's matrix}.
\[
H =
\begin{pmatrix}
1 & -1 & 1 & 1 & -1 \\
-1 & 1 & -1 & 1 & 1 \\
1 & -1 & 1 & -1 & 1 \\
1 & 1 & -1 & 1 & -1 \\
-1 & 1 & 1 & -1 & 1
\end{pmatrix}
\]

To construct an example of a $\mathsf{DNN}_5$ matrix that is not rank-1 generated, we borrow the example from \cite{gulati2025entanglement}. 

\[
A =
\begin{pmatrix}
2\cos{\tfrac{\pi}{5}} & 1 & 0 & 0 & 1 \\
1 & 2\cos{\tfrac{\pi}{5}} & 1 & 0 & 0 \\
0 & 1 & 2\cos{\tfrac{\pi}{5}} & 1 & 0 \\
0 & 0 & 1 & 2\cos{\tfrac{\pi}{5}} & 1 \\
1 & 0 & 0 & 1 & 2\cos{\tfrac{\pi}{5}}
\end{pmatrix}.
\]

By computing $\operatorname{Tr}(AH) = 10 (\cos \pi/5-1) < 0,$ we can verify that the matrix $A$ is not rank-1 generated in the doubly non-negative matrix cone.
\begin{example}
Let us present an example of a family of PPT bosonic edge states that we can detect using the entanglement criterion in \cref{thm:dnn-rank-1}.
$$\sum_{i,j}A_{ij}\ketbra{ij}{ij} + \sum_{i \neq j}Y_{ij}\ketbra{ii}{jj} + \sum_{i \neq j}A_{ij}\ketbra{ij}{ji}$$ for any $A \in \operatorname{ext} \mathsf{DNN}_d$ such that $\operatorname{rk}(A) \geq 2$,  $Y \geq 0$ and $|A_{ij}| \geq Y_{ij}$. We can obtain these states as soon as $d \geq 5$, for constructions see \cite{cristaextremednn}. It can be verified to be PPT entangled, bosonic, and an edge state by choosing $K:= \sum^d_{i=1} \ket{i}\bra{i, i}.$ and applying the previous results (\cref{thm:dnn-rank-1}, \cref{thm:rank-1-generated-ent-transpose}).
\end{example}

We want to point out that for $Y_{ij} = 0, $ the states considered above are called \emph{mixtures of Dicke states} \cite{tura2018separability, singh2021diagonal}, which are separable if and only if $A \in \operatorname{R}_1[\mathsf{DNN}_n].$

\section{Discussion and Outlook}
In this article, we presented a rank-based criterion to detect PPT entanglement in mixed bipartite quantum states by mapping PPT states to cones that are not rank-1 generated. This allows us to use a variety of results and examples of matrix analysis, optimization, and graph theory to construct new methods and examples of PPT entangled states.

Our work definitely opens avenues for future research. The entanglement-theoretic properties of the states constructed by the above methods should be scrutinized (for example, Schmidt numbers and entanglement measures). It remains to be seen whether PPT entangled states can provide meaningful advantages over separable (unentangled) states in quantum protocols, some evidence for which has been recently obtained in \cite{tavakoli2025unlimited, toth2018quantum, pal2021bound} in the area of quantum metrology and non-local games. We hope that the methods used in this paper and examples of PPT entangled states are useful test states for such advantages. It will also be useful to analyse whether the witnesses that detect matrices that are not rank-1 generated in all cases considered in this paper can be used to construct positive indecomposable maps, an important topic of research in operator algebras. This was recently accomplished for copositive matrices \cite{gulati2025positive}, the witnesses discussed in \cref{sec:bose-symmetric}.

Finally, the faces of the PSD cone we consider can be mapped to known convex cones that are not rank-1 generated. In general, it would be useful to understand how this machinery can be expanded to other faces of the PSD cone and appropriately chosen mappings, to construct new interesting examples of PPT entangled and PPT edge-states. It might be useful to first understand this question in low-dimensional quantum systems, i.e, in the case of two qutrits, or qubit-ququarts. 

\bigskip

\section{Acknowledgements}
The author thanks Ion Nechita, Sang-Jun Park, and Satvik Singh for insightful discussions on this topic. The author received support from the University Research School EUR-MINT (State support managed by the National Research Agency for Future Investments program bearing the reference ANR-18-EURE-0023) and ANR project \href{https://esquisses.math.cnrs.fr/}{ESQuisses}, grant number ANR-20-CE47-0014-01.

\end{document}